\documentclass{article}

\usepackage{amsmath}
\usepackage{amsthm}
\usepackage{newtxtext}
\usepackage[subscriptcorrection]{newtxmath}
\usepackage{xcolor,colortbl}
\usepackage{comment}
\usepackage[font={small}]{caption}
\usepackage{subcaption}
\usepackage{placeins}
\usepackage{booktabs}
\usepackage{url}
\usepackage{xr}
\usepackage{natbib}
\usepackage{bbm}
\usepackage{graphicx}
\usepackage[affil-it]{authblk}
\usepackage[plain,noend]{algorithm2e}

\newcommand{\comments}[1]{}

\newcommand{\R}{\mathbb{R}}
\newcommand{\1}{\mathbbm{1}}

\newcommand{\A}{\mathcal{A}}
\newcommand{\X}{\mathcal{X}}
\newcommand{\Y}{\mathcal{Y}}



\newtheorem{theorem}{Theorem}  
\newtheorem{proposition}{Proposition}  
\newtheorem{corollary}{Corollary}  
\newtheorem{remark}{Remark} 

\providecommand{\keywords}[1]{\texttt{Keywords:} #1}
\newcommand{\email}[1]{\texttt{#1}}

\begin{document}
\markboth{J.Kampe et~al.}{Nested exemplar latent space models for dimension reduction
in dynamic networks}

\title{Nested exemplar latent space models for dimension reduction in dynamic networks}

\author[1]{J. N. KAMPE$^*$}
\author[2]{L. A. SILVA$^*$}
\author[3,4]{T. ROSLIN}
\author[1]{D. B. DUNSON}

\affil[1]{Department of Statistical Science, Duke University, Box 90251, Durham, North Carolina 27708, U.S.A. \protect\\ \email{jennifer.kampe@duke.edu}\protect\\
\email{dunson@duke.edu}}
\affil[2]{Department of Decision Sciences, Bocconi University, Via Röntgen 1, 20136 Milan, Italy \protect\\ \email{silva.luca@phd.unibocconi.it}}
\affil[3]{Department of Ecology, Swedish University of Agricultural Sciences, Ulls väg 18B, 75651 Uppsala, Sweden \protect\\ \email{tomas.roslin@slu.se}}
\affil[4]{Spatial Foodweb Ecology Group, Organismal and Evolutionary Biology Research Programme, Faculty of Biological and Environmental Sciences, University of Helsinki, PO Box 65 (Viikinkaari 1), FI-00014 University of Helsinki, Finland}

\maketitle
\def\thefootnote{*}\footnotetext{These authors contributed equally to this work.}\def\thefootnote{\arabic{footnote}}

\begin{abstract}
Dynamic latent space models are widely used for characterizing changes in networks and relational data over time. These models assign to each node latent attributes that characterize connectivity with other nodes, with these latent attributes dynamically changing over time.  Node attributes can be organized as a three-way tensor with modes corresponding to nodes, latent space dimension, and time. Unfortunately, as the number of nodes and time points increases, the number of elements of this tensor becomes enormous, leading to computational and statistical challenges, particularly when data are sparse. We propose a new approach for massively reducing dimensionality by expressing the latent node attribute tensor as low rank. This leads to an interesting new {\em nested exemplar} latent space model, which characterizes the node attribute tensor as dependent on low-dimensional exemplar traits for each node, weights for each latent space dimension, and exemplar curves characterizing time variation. We study properties of this framework, including expressivity, and develop efficient Bayesian inference algorithms. The approach leads to substantial advantages in simulations and applications to ecological networks.
\end{abstract}

\keywords{ Bayesian; Bipartite networks; Dynamic latent space model; Low rank; Tensor factorization.}

\section{Introduction}

In fields ranging from ecology to finance, there is interest in studying how networks change over time, and in particular, how connectivity patterns vary dynamically. We focus on the scenario in which there is a common set of nodes across time, potentially subject to missingness, but the edges indicating interactions between nodes vary dynamically. As a motivating application, we analyze bipartite insect-plant pollination networks in Greenland's Zackenberg Valley. Situated in the high Arctic tundra, the site is disproportionately vulnerable to the effects of climate change (\cite{SCHMIDT20233244}; \cite{Rantenen2022}) and therefore of great interest is studying the impacts of long-term temperature change and associated phenological shifts on these essential pollination networks (\cite{Cirtwill}; \cite{Ascanio2024}). 

Bipartite interaction networks, in which interactions are recorded between species of two distinct types, emerge naturally throughout the ecological literature, as in the plant-pollinator networks of \cite{Cirtwill}, the primate-parasite networks of \cite{Herrera}, and the frugivore-tree networks of \cite{DurandBessart}. As a mathematical object, a binary bipartite dynamic network can be represented as a three-mode tensor $\mathcal{A}\in\R^{N\times M \times T}$, where $N$ and $M$ indicate the number of nodes of each type, $T$ indicates the number of discrete time periods, and each component $\mathcal{A}_{ijt}$ indicates the observation of an edge between nodes $i$ and $j$ at time $t$. Figure \ref{fig:1} presents an example of this data structure taken from the Zackenberg Valley data set, illustrating the evolution of a subnetwork consisting of the 15 most common plants and insects throughout a portion of the 2010 summer season. These networks are characterized by high sparsity, even for these comparatively dense subnetworks. 

\begin{figure}[h]
    \centering
   
    \includegraphics[width = \textwidth]{./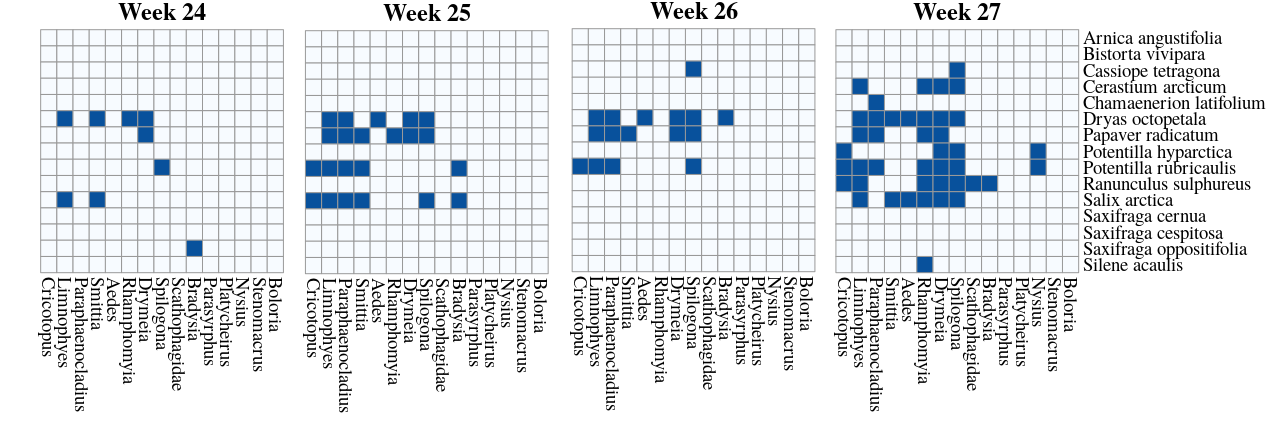}
    \caption{Evolution of pollination interactions among $15$ common plants and insects for the first four weeks of the $2010$ summer season. Rows indicate plant taxa, columns indicate insect taxa, and blue squares indicate observation of a pollination interaction.}
    \label{fig:1}
\end{figure}

Even in the static case, the problem of flexible modeling of network data is challenging, particularly when data are sparse and the number of nodes is moderate to large, as is the case in the present application. Indeed, sparse data are common in many application areas due to the difficulty of measuring real world networks. In the ecological context, painstaking field work is required to directly observe inter-species interactions: such interactions are often fleeting, requiring great effort to observe for any fixed pair, while there are many pairs of species with the potential to interact. While sparsity is a common practical concern, high dimensionality is an intrinsic concern, as the parameters of interest are pairwise edge probabilities resulting in an $N \times M$ and $N \times M \times T$-dimensional parameter space in the static case and dynamic case, respectively. Hence, large and sparse networks are standard in the field, and dimensionality reduction is a key component of any statistical analysis.
 
One common approach to dimensionality reduction assumes that edge probabilities depend only on group membership of the nodes. The simplest version of this approach assumes that grouping can be done based on observed node covariates, while allowing differential between- and within-group connectedness
(\cite{Wang.etal:1987}). Stochastic block models, instead, allow the node group assignments to be unknown
\citep{Nowicky.etal:2001,Kemp.etal:2004}, providing a class of models for inferring node communities. Such models assume stochastic equivalence: edges in the network are conditionally independent given the community memberships of the nodes. Although there is a rich literature on extensions, including the ability to allow node heterogeneity \citep{Airoldi.etal:2005, Karrer.etal:2010}, the overarching assumption of conditional independence given a single node latent class assignment makes stochastic block models incapable of accurately characterizing many real-world networks.

 A flexible alternative is motivated by homophily, the phenomenon in which nodes with similar characteristics are more likely to be linked.  \cite{Hoff.etal:2002} introduce the idea of representing the affinity of a node for interaction via a vector of latent features. The probability of an edge between the nodes $i$ and $j$ is a function of $||u_i-u_j||$, where $\{u_1,\hdots,u_n\}$ are low-dimensional vectors of unobserved latent characteristics. Although this class is well-suited to networks characterized by homophily, it is less able to capture other structures. To obtain a flexible generalization, 
 \cite{Hoff:2007} proposes a class of latent factor models in which the presence of an edge within a dyad is modeled as a function of $u_i^\top \Lambda u_j$ where $\Lambda$ is a diagonal matrix; this class of models is able to capture both stochastic equivalence and homophily. Moreover, given the presence of the matrix $\Lambda$, this model can also capture antihomophily ---- an opposite--attract phenomenon. 
 
The static latent space model of \cite{Hoff.etal:2002} has been extended to the dynamic case via a Markov process (\cite{Sarkar.etal:2005}) and random walk trajectories (\cite{Sewell.etal:2015}). In a parallel thread, \cite{Durante.etal:2014} use Gaussian processes for node-specific latent variables to extend \cite{Hoff:2007} to the time series setting, while \cite{Durante.etal:2015} use the same approach but with more flexible nested Gaussian processes allowing variability over time in the rate of change in the network structure. Stochastic block models have also been extended to the dynamic case: \cite{Xu.etal:2013} model discrete time dynamic networks through a state-space model, which combines a stochastic block model with a simple model for state evolution. \cite{Olivella.etal:2022} combine a hidden Markov model with a mixed membership stochastic block model to identify latent groups underlying the network structure.

Although the above developments have proven useful in their respective applications, dynamic latent factor models \citep{Sarkar.etal:2005, Durante.etal:2014,Sewell.etal:2015} are natural from an ecological perspective. Difficult-to-observe covariates drive interactions: features for insect species indicate latent consumption preferences, while features for plants indicate flower
traits, such that the inner product of a dyad's latent features measures alignment of insect preferences and plant traits, and hence their propensity for interaction. An example is a physical characteristic of the consumer (insect) affecting its access to resources (plants) of different shapes \citep{Garibaldi}. Because these features can be expected to change both seasonally with phenological patterns and in the long term as species adapt to changing climatic conditions, a smooth dynamic process on latent features can capture network evolution.

Despite the mechanical appeal, existing dynamic latent space approaches suffer from computational and inferential shortcomings in moderate to large networks with high sparsity and/or missingness. If we assume a $H$-dimensional latent space and model each component of the latent space through $(N+M)H$ independent temporal Gaussian processes as in \cite{Durante.etal:2014}, we have $(N+M)HT$ unknowns characterizing the dynamically changing edge probability matrix. This results in an excessively flexible model that is subject to overfitting, as illustrated in Section \ref{sec:real:app}.
 If in contrast, we utilize a dynamic stochastic block model as in \cite{Xu.etal:2013} and \cite{Olivella.etal:2022} when interaction propensities are truly driven by more complex latent features, the number of communities needed to effectively capture network transitivity explodes, resulting in an unfavorable trade-off between accuracy and interpretability.

Motivated by ecological applications, we present a novel Bayesian nonparametric tensor decomposition for bipartite dynamic networks. We build on the latent space approach of \cite{Hoff:2007} while achieving substantial dimension reduction through low-rank Candecomp/Parafac tensor decompositions (\cite{Carroll.etal:1970}; \cite{Harshman:1970}) of latent feature tensors. By characterizing each node attribute through low-dimensional exemplar traits, weights for each latent space dimension, and exemplar curves characterizing time evolution, we achieve drastic dimensional reduction while maintaining substantial flexibility. Although the focus is on bipartite networks with two types of nodes, the methodology can be trivially adapted to the case in which all nodes are of one type, as detailed in Section \ref{sec:model:formulation}.

\section{Dimension-reduced latent space model}
\label{sec:2}

\subsection{Model formulation} \label{sec:model:formulation}
Let $\A$ be the $N \times M \times T$  tensor with frontal slices $\A_{..t} \in \{A_1, \ldots, A_{T}\}$, encoding repeated observations of an undirected binary adjacency matrix. In this bipartite context, $i = 1, 2, \ldots, N$ and $j = 1, 2, \ldots, M$ represent nodes of two different types. The entries of $\A$ record interclass interactions such that $A_{ijt} = 1$ if a link is recorded between nodes $i$ and $j$ in observation $t$, and otherwise $A_{ijt} = 0$. Hence, the tensor $\A$ describes a multilayer network: the graph corresponding to $A_{..t}$ has an edge between vertices $i,j$ in layer $t$ if and only if $A_{ijt} = 1$. We suppose in the description to follow that $t$ indexes temporal observations, so that $\A$ encodes a dynamic network; however, the model easily accommodates spatial and spatio-temporal repetitions. For notational efficiency, we define $[N]$ as refering to the index set for the nodes of the first class $\{1,\ldots,N\}$, and define $[M]$ and $[T]$ similarly for nodes of the second class and the time points, respectively. 

We model edges in this dynamic network as the realization of conditionally independent Bernoulli random variables with $N\times M \times T$ latent probability array $\Pi$: 
$A_{ijt}| \Pi_{ijt} \sim \text{Bern}(\Pi_{ijt}).$
Dimensionality reduction and borrowing information between nodes and time points are needed to infer $\Pi$. We perform a first dimension reduction by projecting each node into a low-dimensional latent space. The propensity for interaction between two nodes at a given time is modeled as the weighted inner product of their $H$ dimensional latent traits ($X_{it}$ and $Y_{jt}$) plus a time-varying intercept ($\mu_t$), controlling overall connectivity.
This yields the following model: 
\begin{gather}
\label{lik}  \Pi_{ijt} = f^{-1}(S_{ijt}),\quad 
S_{ijt} = \mu_t + X_{it}^\top\Lambda^H Y_{jt}, \nonumber 
\end{gather}
where $f$ denotes the logit link function, and $\Lambda^H = \hbox{diag}(\lambda^H)$ provides weights for each dimension of the latent space and hence the influence of the corresponding features on the interaction propensity. In the ecological context, $\mu_t$ measures the impact of seasonality on pollination activity, while latent factors represent latent preferences and traits of insects and plants, respectively. These can be expected to progress over time in correspondence to seasonal shifts for each species, for example, changes in activity level and biological needs of insects as their life cycle progresses, and changes in scent emission and density of blooms from early emergence to peak, and decline.

We find that the above formulation is still too richly parameterized in many applications, motivating further dimension reduction. Let $\X \in \R^{NHT}$ be the node attribute tensor for nodes of the first class, with Mode-2 (row) fibers $\X_{i.t} = X_{it}$ providing the embedding of the latent trait space for node $i$ at time $t$; define $\Y \in \R^{MHT}$ similarly to be the node attribute tensor for the second class. The proposed Nested Exemplar Space (\texttt{NEX}) model achieves dimension reduction via low rank factorizations of these node attribute tensors, separately decomposing $\X$ and $\Y$ into the outer product of low dimensional exemplar traits for each node, exemplar curves characterizing temporal variation in the latent attributes, and  weights for each exemplar space dimension: 
\begin{gather}
\label{nex:dec} \X =\sum_{k=1}^K \lambda^K_k U^X_k\otimes V^X_k\otimes W^X_k \\ \label{eq:y} \Y =\sum_{k=1}^K \lambda^K_k U^Y_k\otimes V^Y_k\otimes W^Y_k,
\end{gather}
where $K$ is the dimension of the exemplar space, and $\lambda^K$ is a weight vector that plays a role similar to that of $\lambda^H$ in the latent trait space. For each dimension $k = 1, \ldots, K$ the component vectors $U_k^X \in \R^{N}$ and $U_k^Y \in \R^M$ embed the nodes of each type in the exemplar space, while $V_k^X \in \R^H$ link the exemplar and latent trait spaces, and $W_k^X, W_k^Y \in \R^T$ characterize evolution in the exemplar space over time. With this construction, we massively reduce the number of parameters to $\mathcal{O}(N+M+T)$, while maintaining sufficient flexibility to capture broad probability tensors $\Pi$, as we will show in the theoretical results presented in the Appendix \ref{SM_theory}. 

A key innovation of this model is the use of two separate latent dimensions, which allow highly flexible and interpretable modeling: the $H$-dimensional latent trait space and the $K$-dimensional exemplar space.  Let $\mathcal{S} \in \R^{NMT}$ be the latent interaction propensity tensor with entries $S_{ijt}$ indicating the latent log-odds of interaction between species $i$ and $j$ at time $t$, as introduced in \eqref{lik}. The dimension $H$ upper bounds the rank of every frontal slice of $\mathcal{S}$, and as a result controls the complexity of the networks at any time step $t$. Similarly, $K$ provides an upper bound on the rank of tensor decomposition; it follows that $H\leq K$. Details on the selection of appropriate values for $H,K$ are provided in Section \ref{subsec:prior} below.

Although the above model specification is developed for the bipartite case, this framework is readily extendable to cases with a single class of nodes. In this case, the dynamic network tensor $\A$ is semi-symmetric and we utilize a single latent factor tensor $\X$ to characterize nodes, defining the log-odds connection matrix for a given time step $t$ as 
\begin{equation*}
    \label{symm:ext}
    S_t = \mu_t \1_N \1_M^\top  + X_t \Lambda^H  X_t^\top.
\end{equation*}
Additionally, while we focus on applications with a finite number of time-points, our model naturally extends to continuous time. In the continuous-time case, the factor matrices $W^X$ and $W^Y$ characterizing the third mode are replaced with $K$ functions defined over time. 

With the theoretical results to follow, we characterize the class of time-varying latent probability matrices $\Pi_t$ with representation given by the inner product formulation of \eqref{lik} and the decomposition of the respective latent trait tensors given by \eqref{nex:dec}. Proposition 1 states that given $H,K$ sufficiently large, any time-varying latent interaction propensity matrix $S_t$ has such a representation. Detailed proofs are provided in Appendix \ref{SM_theory} in the continuous time setting and with $N = M$. Because these results immediately generalize to the discrete-time setting with generic $N \neq M$, we continue with the notation developed above. 

\begin{proposition}
\label{prop:1}
Given a time-evolving matrix $S_t\in\R^{N\times N}$, with $t\in[T]$, there exist finite integer values $H, K$ such that for every $t$ we have that
\begin{equation*}
S_t = \mu_t\1_N\1_M^\top + X_t Y_t^{\top},
\end{equation*}
where $\mu_t$ is a time-varying intercept and $X_t, Y_t$ are matrices of time-evolving latent vectors 
$$X_t = \sum_{k=1}^K  w^X_{kt} U^X_k \otimes V^X_k, \;\; Y_t = \sum_{k=1}^K w^Y_{kt} U^Y_k \otimes V^Y_k,$$
for $U^X_k$, $U^Y_k\in\R^N$ and $V^X_k$, $V^Y_k\in\R^H$.
\end{proposition}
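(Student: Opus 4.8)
I would read the stated form as a universal template and verify that it can match any family $S_1,\dots,S_T$ by combining two elementary facts: (i) every $N\times N$ matrix is a product of two $N\times H$ matrices as soon as $H\ge N$, and (ii) every third-order real tensor admits a finite Candecomp/Parafac (CP) decomposition. The intercept plays no essential role: since $\mu_t$ is a free scalar I would simply set $\mu_t\equiv 0$ (equivalently, pick any $\mu_t$ and push the correction $-\mu_t\1_N\1_M^\top$ into the factorization), so the goal reduces to producing integers $H,K$, vectors $U^X_k,U^Y_k\in\R^N$ and $V^X_k,V^Y_k\in\R^H$, and scalars $w^X_{kt},w^Y_{kt}$ such that $S_t=X_tY_t^\top$ for every $t$, with $X_t,Y_t$ given by the two CP-type sums in the statement.

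First I would pin down the slice-wise factorization in the crudest possible way: take $H=N$ and, for every $t$, set $X_t=S_t$ and $Y_t=I_N$, so that $X_tY_t^\top=S_tI_N=S_t$ holds identically and uniformly in $t$. It then remains only to certify that the two $N\times H\times T$ tensors so prescribed — $\X$ with frontal slices $\X_{\cdot\cdot t}=S_t$, and $\Y$ with frontal slices $\Y_{\cdot\cdot t}=I_N$ — admit the nested-exemplar form $\X_{iht}=\sum_{k=1}^K w^X_{kt}\,U^X_{ik}V^X_{hk}$, and likewise for $\Y$. But this is exactly a CP decomposition with the first mode indexing nodes, the second indexing latent-trait dimensions, and the third indexing time, so that $w^X_{kt}$ is the $t$-th entry of $W^X_k$; every tensor of format $d_1\times d_2\times d_3$ has one with at most $\min\{d_1d_2,d_1d_3,d_2d_3\}$ rank-one terms. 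I would take $K=\min\{N^2,NT\}$, which bounds the CP rank of both $\X$ and $\Y$ (for $\X$ this is the generic bound; for $\Y$, being constant in $t$, the explicit decomposition $\Y=\sum_{h=1}^N e_h\otimes e_h\otimes\1_T$ has rank $N\le\min\{N^2,NT\}$, with $e_h$ the $h$-th standard basis vector), and pad both decompositions with zero rank-one components to exactly $K$ terms. Reading off the three factor matrices of each decomposition yields precisely the $U^X_k,V^X_k,W^X_k$ (resp.\ the $Y$-superscripted objects) demanded by the proposition, and then $S_t=0\cdot\1_N\1_M^\top+X_tY_t^\top$, establishing the claim with the finite choices $H=N$ and $K=\min\{N^2,NT\}$.

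There is no genuine obstacle here — the proposition is an expressivity statement, not an optimality one, so no sharpness in $H,K$ is claimed — and the only care needed is bookkeeping: correctly matching the three CP modes to the roles $(U_k,V_k,W_k)$ and padding the two decompositions to a common $K$. The continuous-time version stated in the paper, in which the third-mode factors become functions $w^X_k(\cdot),w^Y_k(\cdot)$ on the time domain, follows from the identical construction, and the only extra point to check is that these third-mode "factors" are honest functions of $t$, which they are by construction: the curve $t\mapsto S_t$ lies in the span of the $N^2$ rank-one matrices $e_i\otimes e_h$, so $X_t=S_t=\sum_{(i,h)}(S_t)_{ih}\,e_i\otimes e_h$ and $Y_t=I_N$ give an exemplar representation with $K=N^2$ and time-functions $w^X_k(t)=(S_t)_{ih}$ for $k\leftrightarrow(i,h)$. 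If one instead wishes to exploit $\mu_t$ or the known rank of each slice, one can replace $H=N$ by $H=\max_t\operatorname{rank}(S_t)$ through slice-wise truncated singular value decompositions, at the cost of slightly more notation but no new ideas.
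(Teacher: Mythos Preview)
Your proof is correct and follows essentially the same approach as the paper. The paper's proof also sets $\mu_t\equiv 0$, factors each slice as $S_t=X_tY_t^\top$ (via SVD, giving $H=\max_t\operatorname{rank}S_t$ rather than your cruder $X_t=S_t$, $Y_t=I_N$ with $H=N$), and then expands $X_t$ in the canonical basis $e_n^N\otimes e_h^H$ to obtain $K=NH$ --- exactly the construction you write down at the end; you even note the SVD refinement yourself.
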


Corollary 1 extends this result to the latent probability matrix $\Pi_t$. 

\begin{corollary}
\label{cor:1}
    Given a time-evolving link probability matrix $\Pi_t\in\R^{N\times N}$, with $t\in[T]$, there exist finite integer values $H, K$ such that for every $t$ we have that
    $$ \Pi_t = f(S_t), \;\;t\in[T],$$
    where $S_t$ is defined in Proposition \ref{prop:1}, and 
    $f(\cdot)$ is the logit function applied elementwise to $S_t$.
\end{corollary}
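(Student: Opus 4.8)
The plan is to reduce Corollary~\ref{cor:1} directly to Proposition~\ref{prop:1} by passing through the link function. Because every entry of a link probability matrix lies in $(0,1)$, the logit $f$ is a bijection from $(0,1)$ onto $\R$; so the first step is to define, for each $t\in[T]$, the real matrix $S_t := f(\Pi_t)$ with $f$ applied entrywise. By construction $\Pi_t$ is recovered from $S_t$ by the elementwise inverse-logit transform $f^{-1}$, which is exactly the relation $\Pi_t = f^{-1}(S_t)$ asserted in the statement.

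The second step is simply to invoke Proposition~\ref{prop:1} on the family $\{S_t\}_{t\in[T]}$ just constructed. It supplies finite integers $H,K$, vectors $U^X_k,U^Y_k\in\R^N$ and $V^X_k,V^Y_k\in\R^H$, scalar weights $w^X_{kt},w^Y_{kt}$, and intercepts $\mu_t$ with $S_t = \mu_t\1_N\1_M^\top + X_t Y_t^\top$, where $X_t=\sum_{k=1}^K w^X_{kt}U^X_k\otimes V^X_k$ and $Y_t=\sum_{k=1}^K w^Y_{kt}U^Y_k\otimes V^Y_k$. Composing with the inverse link gives the claimed representation of $\Pi_t$, with the same $H,K$ valid uniformly over $t$ since Proposition~\ref{prop:1} already produces a single pair $(H,K)$ working for all $t$.

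The one point that needs care --- and the only genuine obstacle --- is the boundary behaviour of the logit: if some $\Pi_{ijt}\in\{0,1\}$ then $f(\Pi_{ijt})=\pm\infty$ and $S_t$ is not a finite real matrix, so Proposition~\ref{prop:1} does not apply verbatim. I would dispatch this by either (i) stating the corollary for probability matrices with entries in the open interval $(0,1)$, which is the natural domain of a logit model, or (ii) if degenerate entries must be allowed, arguing by approximation: truncate $\Pi_t$ into $[\varepsilon,1-\varepsilon]$, apply the above to obtain a representation for each $\varepsilon>0$, and let $\varepsilon\to 0$ so that the represented probability matrices converge entrywise to $\Pi_t$. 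Apart from this, the corollary is immediate: all the substantive work lives in Proposition~\ref{prop:1}, and Corollary~\ref{cor:1} only composes that result with the inverse link function.
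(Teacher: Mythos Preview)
Your proposal is correct and follows exactly the route taken in the paper: the paper's proof is a single line stating that the result ``follows immediately from Proposition~\ref{prop:1} and from the fact that the logit link function $f(\cdot)$ is a continuous, one-to-one, and increasing function.'' Your discussion of the boundary case $\Pi_{ijt}\in\{0,1\}$ is in fact more careful than the paper, which implicitly works on the open interval $(0,1)$ and does not mention degenerate probabilities at all.
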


\subsection{Prior Specification}
\label{subsec:prior}
To complete our Bayesian model formulation, we place independent Gaussian priors on the latent features for each node and for each dimension of the exemplar space:

 \begin{gather}
 \label{eq:prior:1}   U^X_{ik} \sim N(0, \sigma^2),\:\;\text{ for all } ik\in[N]\times [K] \\
    U^Y_{jk} \sim N(0, \sigma^2),\:\;\text{ for all } jk\in[M]\times [K]\\
    V^X_{hk},V^Y_{hk} \sim N(0, \sigma^2),\:\;\text{ for all } hk\in[H]\times [K]  
 \end{gather}
For both the third mode factor matrices $W^X$ and $W^Y$ and the time-varying intercepts $\mu$, we utilize multivariate Gaussian priors, inducing temporal dependence in the networks: 
\begin{gather}
\label{first:prior}  W^X_k,W^Y_k \sim N(0, \Sigma)\;\;\text{ for all } k\in [K]\\
\label{last:prior}    \mu \sim N(\mu_0, \Sigma_{\mu}),
\end{gather}
where $\Sigma$ and $\Sigma_{\mu}$ characterize the temporal dependence.
For example, a reasonable choice is 
$\Sigma = \exp(-kD),$
where $D$ is a matrix quantifying distance between different observation times, and $k$ is a smoothing factor governing the rate of decay. For more complex study designs, with spatio-temporal structure or a temporal structure involving seasonality as well as annual change, a simple modification is to employ a separable covariance 
$\Sigma = \exp(-k_1 D_1) \exp(-k_2 D_2),$
where $D_1$ and $D_2$ indicate distance in separate dimensions. \eqref{first:prior}-\eqref{last:prior} can accommodate continuous time by substituting multivariate Gaussian distributions with Gaussian processes. 

Appropriate selection of dimension for the latent trait space and exemplar space is crucial to model performance, however, identifying the rank of a tensor is an NP-hard problem \citep{Haastad:1989}. We sidestep this issue by taking the over-fitted Bayesian modeling approach of \cite{Bhattacharya.etal:2011}, choosing $H,K$ as conservative upper bounds while adaptively shrinking redundant dimensions. Specifically, we place a double-shrinkage prior on  $\lambda^H, \lambda^K$, which collapses the distribution of factors in redundant dimensions to zero; this induces low rank in both the latent $H$- and exemplar $K$- spaces. The proposed shrinkage process is given by 
\begin{gather}
\label{mgp:prior}
    \lambda^K_k = \prod_{s=1}^k \theta^K_s,\;\; \theta^K_1\sim \hbox{Ga}(\alpha^K_1, \beta^K_1),\;\theta^K_{s>1}\sim\hbox{Ga}(\alpha^K_2,\beta_2^K) \\
    \label{mgp:prior:2}
    \lambda^H_h = \prod_{s=1}^h \theta^H_s,\;\; \theta^H_1\sim \hbox{Ga}(\alpha^H_1, \beta_1^H),\;\theta^H_{s>1}\sim\hbox{Ga}(\alpha^H_2,\beta_2^H).
\end{gather}
We can compute the first two moments of the exemplar space weights $\lambda_k^K$ as:
\begin{gather*}
	\nonumber E(\lambda^K_k) = \frac{\alpha^K_1}{\beta^K_1} \bigg( \frac{\alpha^K_2}{\beta^K_2}\bigg)^{k-1} \\
    	\hbox{var}(\lambda^K_k) = \bigg(\frac{\alpha^K_1}{\beta^K_1}\bigg)^2\bigg(\frac{\alpha^K_2}{\beta^K_2}\bigg)^{2(k-1)}\bigg\{ \bigg(1+\frac{1}{\alpha^K_1}\bigg)\bigg(1+\frac{1}{\alpha^K_2}\bigg)^{k-1} - 1\bigg\},
\end{gather*}
for $k = 1,\hdots,K$. We see that $\alpha^K_1$ and $\beta^K_1$ control the general level and variability of the entries in $\lambda^K$, while $\alpha^K_2$ and $\beta^K_2$ impose shrinkage.  Selecting $\alpha_1^K/\beta_1^K>1$ and $\alpha_2^K/\beta_2^K<1$  favors stochastically decreasing values for $\lambda_k^K$ as $k$ increases, which causes the contribution of higher indexed dimensions to decrease to zero.  An analogous selection is made for 
 hyperparameters in the prior of $\lambda_H$. Furthermore, since by construction $H\leq K$, we choose $\alpha_2^K / \beta_2^K > \alpha_2^H / \beta_2^H$ to favor a greater relative shrinkage of the latent trait space compared to the exemplar space. This soft constraint on relative dimensionalities yields good recovery of the true tensor rank in simulations (see Section \ref{SimulationNex}) while avoiding degeneracies induced by a hard constraint. 
 
 The above approach is related to the multiplicative gamma process 
\citep{Bhattacharya.etal:2011}; 
 see \cite{Durante.etal:2014} and \cite{DURANTE_MGPnote} for a detailed discussion of the selection of hyperparameters. Alternative shrinkage priors are possible, including the cumulative shrinkage prior of \cite{LegramantiCSP}. In practice, we find that the multiplicative gamma approach yields better recovery of true network rank in simulation studies.  

 Although the theoretical results in Proposition \ref{prop:1} and Corollary \ref{cor:1} guarantee that NEX is flexible enough to represent any dynamic network given sufficiently large $H,K$, 
  this result is only relevant from a Bayesian perspective if the priors selected have full support. We now turn our attention to this question and show that there is a positive probability of generating $\{\Pi_t, t \in [T] \}$  arbitrarily close to any true $\{\Pi^0_t, t\in[T]\}$. As with the previous theoretical results, detailed proofs are provided in Appendix \ref{SM_theory} in the more general continuous-time setting, and we adapt the statements to the discrete-time setting for notational consistency here.
Theorem \ref{thm:1} below states that \eqref{eq:prior:1}-\eqref{last:prior} has full support for latent interaction propensity matrices $S_t$.

\begin{theorem}
\label{thm:1}
    Let $p_{\mathcal{S}}$ be the prior induced on $\{S(t)\}$, $t\in [T]$ by the priors on the latent factors \eqref{eq:prior:1}-\eqref{first:prior} and time-varying intercept \eqref{last:prior}. 
    For every real matrix $S^0_t$ with $t \in [T]$ and for every $\epsilon>0$, 
    $$ p_{\mathcal{S}}\Big\{\underset{t\in [T]}{\max}\;\; 
     || S_t - S^0_t||_2 < \epsilon \Big\} > 0.$$
\end{theorem}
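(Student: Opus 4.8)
The plan is to reduce the statement to a finite-dimensional approximation problem and then invoke the positivity of Gaussian priors on the relevant parameters. First I would use Proposition~\ref{prop:1}: there exist finite $H, K$ such that \emph{every} target collection $\{S^0_t : t \in [T]\}$ admits an exact representation $S^0_t = \mu^0_t \1_N\1_M^\top + X^0_t (Y^0_t)^\top$ with $X^0_t = \sum_k w^{X,0}_{kt} U^{X,0}_k \otimes V^{X,0}_k$ and analogously for $Y^0_t$. Fixing this $(H,K)$, the question becomes: does the prior assign positive probability to a neighborhood of the finite parameter vector $\bigl(\{U^{X,0}_k\},\{U^{Y,0}_k\},\{V^{X,0}_k\},\{V^{Y,0}_k\},\{W^{X,0}_k\},\{W^{Y,0}_k\},\mu^0\bigr)$? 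Note we may absorb the weights $\lambda^K_k$ into the component vectors, since any nonzero weight vector is in the support of the multiplicative gamma process, so effectively the free parameters are these Gaussian-distributed factors.

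Next I would establish that the map
\[
\Phi:\bigl(\{U^X_k\},\{U^Y_k\},\{V^X_k\},\{V^Y_k\},\{W^X_k\},\{W^Y_k\},\mu\bigr)\;\longmapsto\;\bigl(S_t\bigr)_{t\in[T]}
\]
is continuous (indeed polynomial in its arguments), so that for any $\epsilon>0$ there is a radius $\delta>0$ such that any parameter value within $\delta$ (in Euclidean norm) of the target $(\cdot)^0$ produces $\max_{t}\|S_t - S^0_t\|_2 < \epsilon$. Then it suffices to show the prior puts positive mass on this $\delta$-ball. Because the priors \eqref{eq:prior:1}--\eqref{first:prior} on $U, V, W$ and \eqref{last:prior} on $\mu$ are (products of) nondegenerate Gaussians — the $U$'s and $V$'s are i.i.d.\ $N(0,\sigma^2)$, the $W_k$'s are $N(0,\Sigma)$ with $\Sigma$ positive definite, and $\mu\sim N(\mu_0,\Sigma_\mu)$ — every open ball in the ambient Euclidean space has positive probability under each factor. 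By independence across the blocks, the product measure assigns positive probability to the product of coordinate balls, which contains (after shrinking radii) the $\delta$-ball around the target. Composing with $\Phi$ gives $p_{\mathcal{S}}\{\max_t\|S_t - S^0_t\|_2<\epsilon\}>0$.

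The main obstacle, and the step requiring care, is the interplay between the over-specified dimensions and the representation from Proposition~\ref{prop:1}. The proposition guarantees existence of \emph{some} finite $(H,K)$, but the prior is specified for fixed, user-chosen $(H,K)$; to make the argument airtight one must either argue the statement is understood for $(H,K)$ chosen at least as large as those furnished by Proposition~\ref{prop:1} (the natural reading, mirroring the over-fitted modeling philosophy of Section~\ref{subsec:prior}), or pad the target representation with zero components in the extra dimensions and note that a neighborhood of zero still has positive Gaussian mass. A second point of care is that $\Sigma$ and $\Sigma_\mu$ must be assumed strictly positive definite for the Gaussian (or Gaussian process) marginals to have full support on $\R^T$; this holds for the exponential-decay kernels mentioned in the text, and in the continuous-time version one invokes the analogous full-support property of the Gaussian process on $C([0,T])$ (with respect to the sup norm) for sufficiently regular covariance kernels. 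Beyond these bookkeeping matters, the argument is the standard ``continuity of the parametrization plus full support of the prior on parameters implies full support on the induced object'' template, so no hard estimates are needed.
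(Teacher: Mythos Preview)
Your argument is correct for the discrete-time statement as written. It differs from the paper's proof in that the paper proves the result in the more general continuous-time setting with $t\in\mathcal{T}$ compact and then specializes: there one must cover $\mathcal{T}$ by finitely many $\epsilon_0$-balls, apply the triangle inequality $\|S(t)-S_0(t)\|_2\le\|S(t)-S(t_0)\|_2+\|S(t_0)-S_0(t_0)\|_2+\|S_0(t_0)-S_0(t)\|_2$, and use almost-sure continuity of the Gaussian-process paths (and continuity of $S_0$) to control the first and third terms before invoking full support at the finitely many centers $t_0$. Your route instead treats $[T]$ as finite from the outset, views $\Phi$ as a polynomial map between finite-dimensional Euclidean spaces, and reduces immediately to positivity of a nondegenerate Gaussian on an open ball; this is more elementary and avoids all the path-continuity machinery, at the cost of not covering the continuous-time case. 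Both approaches ultimately hinge on the same two ingredients---Proposition~\ref{prop:1} for representability and full support of Gaussian priors---and your handling of the over-specified $(H,K)$ via zero-padding, and of the $\lambda^K$ weights via absorption into the factors, matches what the paper does implicitly (it sets $\lambda^H=\lambda^K=1$ and takes $H\ge N$, $K\ge NH$ in the proofs).
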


Corollary \ref{cor:2} extends this result to the time-varying latent probability matrix. 

\begin{corollary}
    \label{cor:2}
    Let $p_{\Pi}$ be the prior induced on $\{\Pi(t)\}$, $t\in [T]$ by the priors specified on the latent factors \eqref{eq:prior:1}-\eqref{first:prior} and time-varying intercepts \eqref{last:prior}. Then for every real probability matrix $\Pi^0_t$ with $t \in [T]$ and for every $\delta>0$  we have
    $$ p_{\Pi}\Big\{\underset{t\in [T]}{\max}\;\; 
     || \Pi_t - \Pi^0_t|| < \delta \Big\} > 0.$$
\end{corollary}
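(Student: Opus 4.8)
The plan is to deduce Corollary~\ref{cor:2} from Theorem~\ref{thm:1} by a continuous-mapping argument: the relation $\Pi_t = g(S_t)$ of Corollary~\ref{cor:1}, where $g$ is the logistic c.d.f.\ applied entrywise, is a fixed, smooth, globally Lipschitz bijection between $\R$ and $(0,1)$, so closeness of the log-odds matrices $S_t$ transfers to closeness of the probability matrices $\Pi_t$; moreover $p_\Pi$ is exactly the pushforward of $p_{\mathcal S}$ under $g$, so a positive-probability event for $\{S_t\}$ gives a positive-probability event for $\{\Pi_t\}$.

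First I would reduce to the case in which every entry of each $\Pi^0_t$ lies in the open interval $(0,1)$. If this holds, put $S^0_t = g^{-1}(\Pi^0_t)$ entrywise; this is a finite real matrix for each $t\in[T]$, to which Theorem~\ref{thm:1} applies. If instead some entries of some $\Pi^0_t$ equal $0$ or $1$, then $g^{-1}(\Pi^0_t)$ is not finite and Theorem~\ref{thm:1} cannot be invoked directly; but matrices with all entries in $(0,1)$ are dense among those with entries in $[0,1]$, so I first choose $\{\tilde\Pi_t\}$ with interior entries and $\max_{t\in[T]}\|\Pi^0_t-\tilde\Pi_t\|<\delta/2$, after which it suffices to approximate $\{\tilde\Pi_t\}$ to within $\delta/2$. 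Hence I may assume, at the cost of halving $\delta$, that every $\Pi^0_t$ has entries in $(0,1)$ and write $S^0_t=g^{-1}(\Pi^0_t)$.

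Next I would make the modulus of continuity explicit. The logistic c.d.f.\ satisfies $g'\le 1/4$, so $|g(a)-g(b)|\le\tfrac14|a-b|$ for all $a,b\in\R$, and therefore the entrywise map $g$ is $\tfrac14$-Lipschitz in the Frobenius norm: $\|g(S)-g(S^0)\|_F\le\tfrac14\|S-S^0\|_F$. Combining this with the elementary inequalities $\|M\|_2\le\|M\|_F\le\sqrt{N}\,\|M\|_2$ for $N\times N$ matrices — and, should the norm $\|\cdot\|$ in the statement of the corollary be a different matrix norm, with the corresponding norm-equivalence constant — there is a constant $C=C(N)$ with $\|g(S)-g(S^0)\|\le C\,\|S-S^0\|_2$ for all $S,S^0\in\R^{N\times N}$.

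Finally I would assemble the pieces: given $\delta>0$, apply Theorem~\ref{thm:1} to $\{S^0_t\}$ (after the reduction above) with $\epsilon=\delta/(2C)$, so that the event $E=\{\max_{t\in[T]}\|S_t-S^0_t\|_2<\epsilon\}$ has $p_{\mathcal S}(E)>0$; since $p_\Pi$ is the law of $\{g(S_t)\}$ when $\{S_t\}\sim p_{\mathcal S}$, on $E$ we have $\max_{t\in[T]}\|\Pi_t-\Pi^0_t\|=\max_{t\in[T]}\|g(S_t)-g(S^0_t)\|\le C\epsilon=\delta/2<\delta$, and combining with the $\delta/2$ from the density step in the boundary case gives $p_\Pi\{\max_{t\in[T]}\|\Pi_t-\Pi^0_t\|<\delta\}\ge p_{\mathcal S}(E)>0$. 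The only step that requires any care — it is not really an obstacle — is the boundary case folded into the reduction, where $\Pi^0_t$ has entries in $\{0,1\}$ so that no finite matrix maps to it under $g^{-1}$; everything else is the continuous-mapping principle together with the explicit Lipschitz bound for $g$.
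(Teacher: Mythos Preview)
Your proposal is correct and follows essentially the same continuous-mapping argument as the paper: pull $\Pi^0_t$ back through the inverse logit to obtain $S^0_t$, invoke Theorem~\ref{thm:1}, and push forward via the (uniformly) continuous link. Your version is in fact more careful than the paper's, which tersely cites continuity of the link without making the Lipschitz constant explicit and does not separately address the boundary case where some entries of $\Pi^0_t$ lie in $\{0,1\}$; your density reduction handles that cleanly.
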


\subsection{Computation}
Because all of the parameters in \eqref{eq:prior:1}-\eqref{last:prior} are continuous, we obtain posterior samples using Hamiltonian Markov chain Monte Carlo implemented in the Stan probabilistic programming language. We find this works well in a wide variety of simulations and outperforms data augmentation Gibbs samplers that we considered in small scale settings. 

\section{Simulation study} 
\label{SimulationNex}
In this section we provide a simulation study to evaluate the performance of the proposed model in recovering the latent probabilities and data generating process. Performance is assessed via accuracy in inferring the effective dimensions of the latent and exemplar spaces, as well as in- and out-of-sample recovery of link probabilities. Further simulation studies demonstrating robustness to model miss-specification for data generated from dynamic latent factor and stochastic block models are provided in Appendix \ref{SM_SimulationNex}.

We generate a collection of $20 \times 25$ time-varying 
bipartite networks $A_t$ with $t\in\{1,2,\ldots,40\}$. Data are simulated according to our generative model with true exemplar and latent dimensions $K^{true}=5, H^{true}=2$ with $\lambda_H=\lambda_K = 1$ such that all dimensions contribute equally to the network structure a priori. We set the length scale of the Gaussian processes using $k_\mu=k_W=0.04$, and use a prior mean of $\mu_0=-0.6$ for the logit intercept and $\sigma^2=2/\sqrt{K^{true}}$ as prior variance for the entries of the non-temporal factor matrices in the exemplar space decomposition. The adjacency tensor $\mathcal{A}$ is generated by likelihood \eqref{lik}. To evaluate out-of-sample performance, we treat the final network $A_{40}$ as missing. We fit the proposed model to the synthetic data $\mathcal{A}$ using conservative truncation levels $H=K=10$; further details are given in Appendix \ref{SM_SimulationNex}. 

\begin{figure}[h!]
	\begin{minipage}{0.5\textwidth}
		\centering
		\includegraphics[width=\textwidth]{./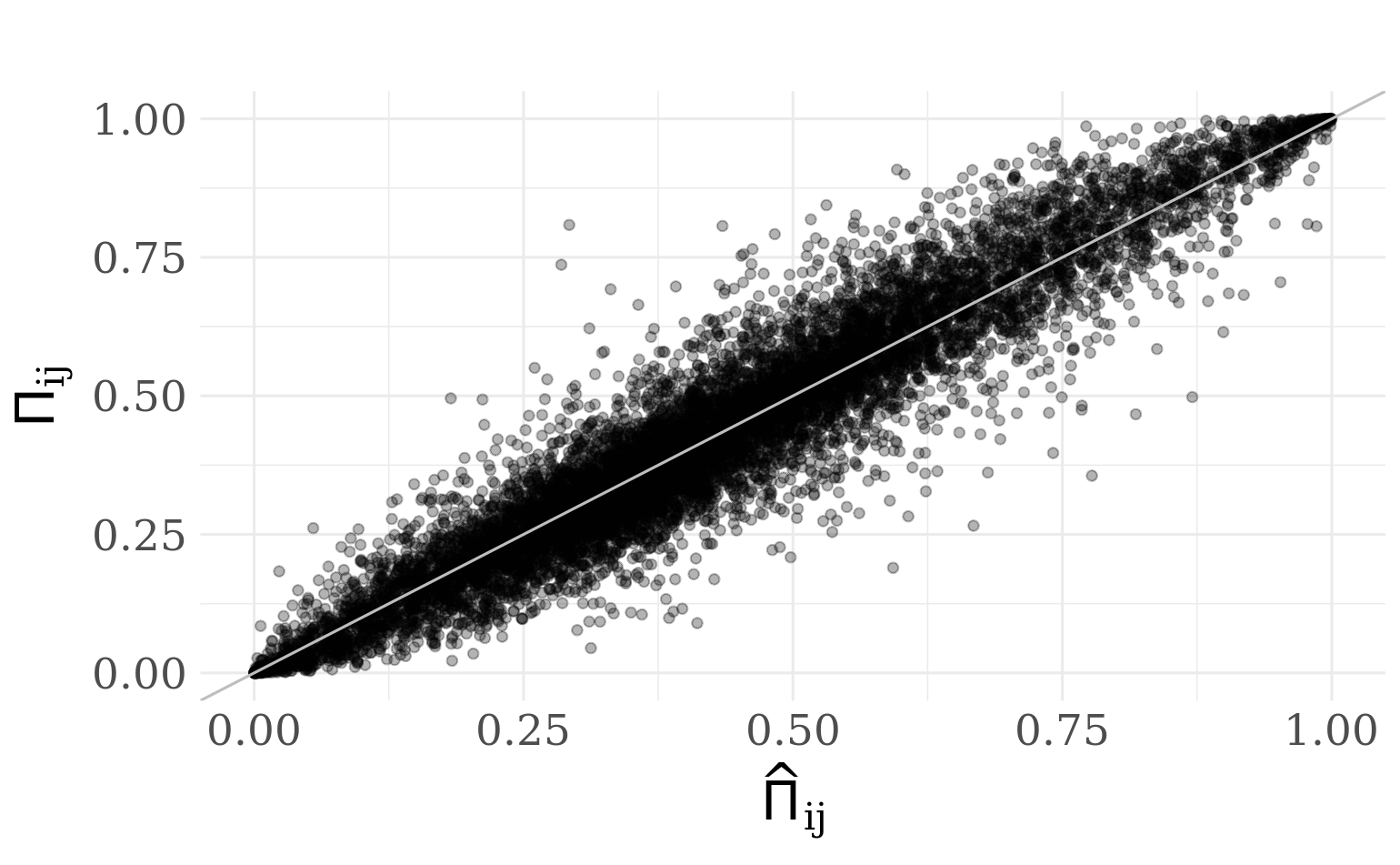}
		\label{fig:figure1}
	\end{minipage}\hfill
	\begin{minipage}{0.5\textwidth}
		\centering
		\includegraphics[width=\textwidth]{./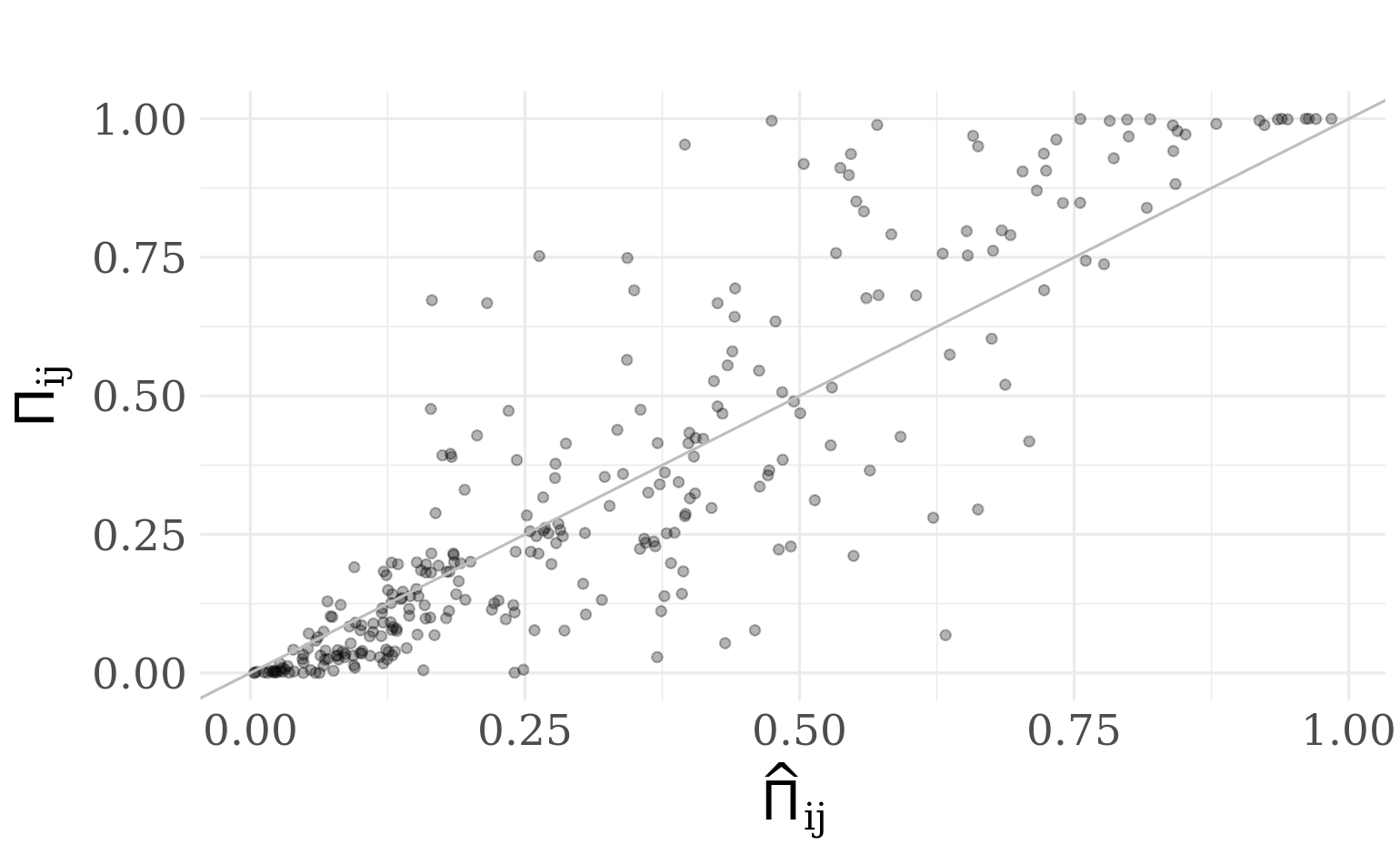}
		\label{fig:figure2}
	\end{minipage}
	\caption{Simulation results. Left: in-sample true probabilities $\pi_{ijt}^0$ vs posterior means $\hat\pi_{ijt}$ for $t\in\{1,2,\ldots,39\}$. Right: out-of-sample true probabilities $\pi_{ij40}^0$ vs posterior means $\hat\pi_{ij40}$. }
	\label{fig:prob:is:vs:oos}
\end{figure}

Our model shows excellent performance in estimating the link probabilities both in- and out-of-sample. 
The estimated posterior mean probabilities capture the variability observed in the true values of $\Pi_{t}$ for $t\in\{1,\ldots,40\}$, with only a slight drop in performance for the out-of-sample network at $t=40$. Figure \ref{fig:prob:is:vs:oos} presents an aggregate analysis across time points comparing estimated and true link probabilities in-sample and out-of-sample. Figure \ref{fig:prob:ts} shows estimated and true link probability matrices for $t \in \{1,20,40\}$.

\begin{figure}[htp]
    \centering
    \includegraphics[scale=0.25]{./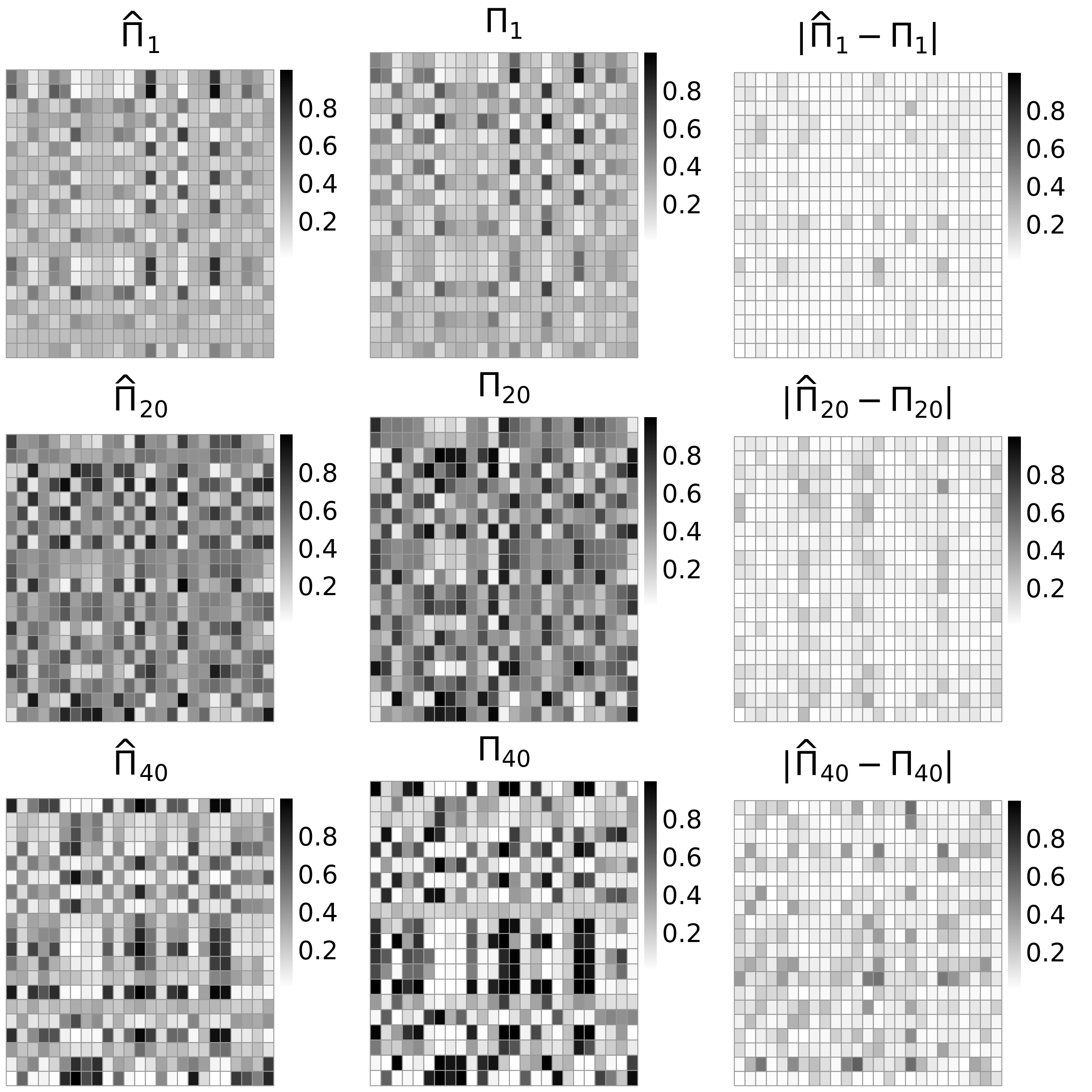}
    \caption{Simulation results. From top to bottom: rows indicate the networks at randomly chosen in-sample time steps $t = 1, 20$ and out-of-sample time step $t=40$. \emph{Left column}: Posterior means of $\hat\Pi_{t}$. \emph{Center column}: True underlying $\Pi_{t}$. \emph{Right column}: Absolute value of the difference between the two $|\hat\Pi_{t}-\Pi_{t}|$.}
    \label{fig:prob:ts}
\end{figure}

 Figure \ref{fig:shrink} shows that the model is capable of effectively shrinking towards the true underlying values of $H,K$ by collapsing the distribution of factors in redundant dimensions to zero. This greatly simplifies model specification, allowing the user to set a conservative upper bound for these dimensions. 
 
\begin{figure}[htp]
    \centering
    \hspace{-1.0cm}
	\includegraphics[scale=0.35]{./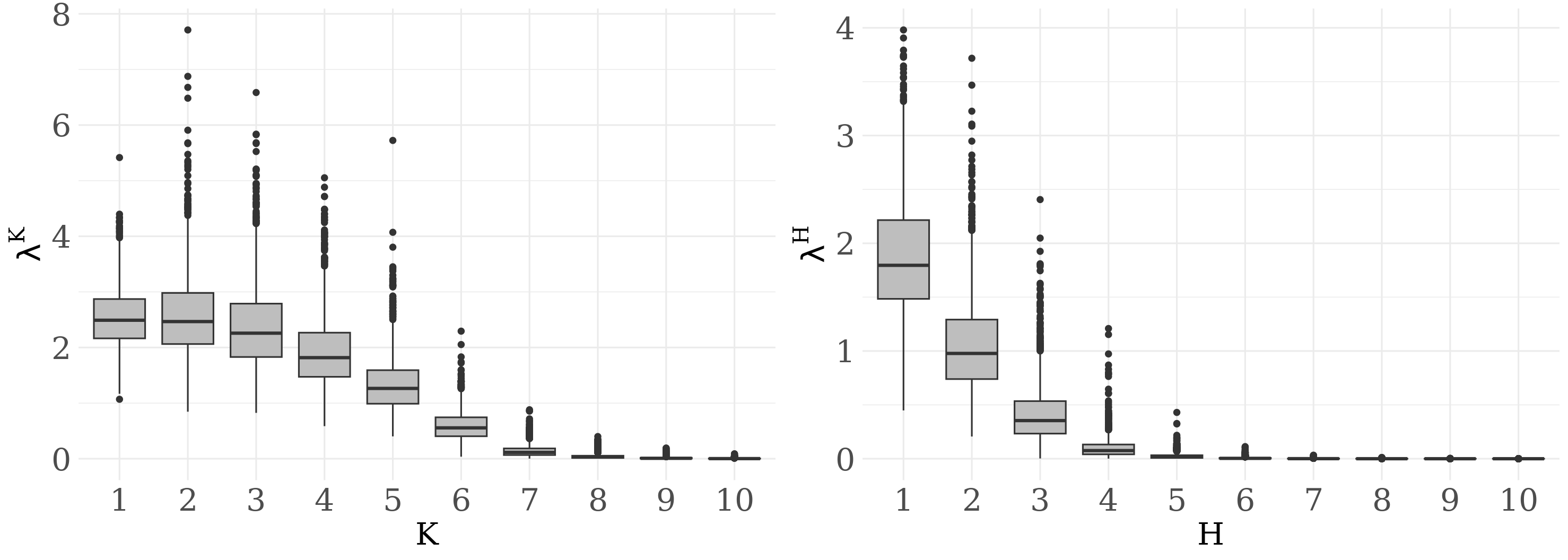}
	\caption{Posterior samples for different components of $\lambda^K$(left) and $\lambda^H$(right) when data are simulated with true values $K = 5$ and $H = 2$}
	\label{fig:shrink}
\end{figure}

\section{Application to dynamic plant-pollinator networks in the arctic}
\label{sec:real:app}

Interaction networks offer a concise overview of complex ecological communities while providing insights into a community's resilience in the face of exogenous shocks. We analyze data consisting of repeated observations of a plant-pollinator network in the high-arctic Zackenberg Valley in Northeast Greenland National Park. Observations were collected on a 500 by 500 meter study plot weekly in the snow-free season in the years 1996, 1997, 2010, 2011, and 2016, resulting in a collection of 49 networks, published in \cite{Cirtwill}. Because network observations were collected over a two-decade period in which climatic changes have been particularly pronounced in the Arctic \citep{Rantenen2022}, it is of great interest to analyze the effects of shifting phenologies (that is, the timing of flower blooms and the emergence of arthropods) on pollination interactions. The observed networks are highly sparse, with less than one percent of all possible interactions observed; additionally, 46\% of observed interactions are observed only once.

The observed networks encode flower visitation interactions over a 40-minute period: two individuals of each species were identified and monitored, with all insects visiting flowers of the selected individuals recorded, captured and identified (with taxa confirmed by DNA barcoding \citep{Hebert2003}). Although flower visitation data were collected for all study years, in 2016 the data were complemented with pollen transport data, based on microscopic analysis of pollen grains present on each insect captured during flower visitation (see \cite{Cirtwill2024}). Hence, the pollen transport data yield a second network, which can be used to validate predictions based on the flower visitation data. In total, the data include 39 plant taxa and 114 arthropod taxa. Since most arctic pollinators are small, dark-colored, and difficult to distinguish even by experts, taxa are recorded with some taxonomical ambiguity, with some individuals identified only at the genus or family level and others fully identified to the species level. Additional details on the Zackenberg data can be found in Appendix \ref{SM_app} and \cite{Cirtwill}.

We resolve taxonomical ambiguity by collapsing partially identified individuals into operational taxonomical units, and limit our analysis to taxa occurring in more than one interaction across all time periods. Additionally, we remove two weekly networks consisting of interactions involving a single plant present. Further details on taxonomical cleaning and subsetting can be found in Appendix \ref{SM_app}.  The application of the protocol described above results in a network of 35 plant taxa and 58 insect taxa observed over 47 time periods with an observed interaction prevalence of 1.8\%. However, behind this ostensible sparsity, there is a high degree of missingness due to nonobservable interactions: for an interaction to be observed, the involved species must co-occur in the observation period. 

To avoid having variation in species occurrence drive our inferences,  we model network variation conditional on species co-occurrence.
Plant occurrence is measured without error, while insect occurrence is inferred from the interaction data: we define insects that occur at time $t$ as those observed in at least one interaction at time $t$ or the preceding week $t-1$ or the following week $t+1$. This is because insect phenology is strongly structured by season (see \cite{Wirta}, \cite{Ascanio2024}; see also Figure 11 in the Appendix), with the overall activity period defined by a start date and an end date. During the intervening period, the taxon will be present and forage for at least some portion of the midday observation window, whereas some taxa may be missed due to under-sampling. We performed a sensitivity analysis with respect to the construction of the occurrence matrices in Appendix \ref{SM_app_sens}. 

We adopt a conditional modification of \texttt{NEX} for the Zackenberg data. Let $A_{ijwr}$ be an indicator for the event that we observe that a plant $i$ is pollinated by an insect $j$ during the seasonally aligned week $w$ of year $r$. We define a model for interaction conditional on pairwise co-occurrence:  
\begin{align*}
     E(a_{ijwr} \mid \pi_{ijwr},  O_{ijwr} ) &= \pi_{ijwr} O_{ijwr} = \frac{ O_{ijwr}}{1 + e^{-s_{ijwr}}} \\
    s_{ijwr} &= \mu_w + \alpha_i + \gamma_r + \beta c_{wr} +
     x_{iwr}'\Lambda^H y_{jwr},
\end{align*}
where $\mu_w, \gamma_r, \alpha_i$ are random effects for seasonality, annual variability, and plant node degree heterogeneity, $\beta$ is a fixed effect for temperature in degrees Celsius ($c_{wr}$), and $O_{ijwr}$ is an co-occurrence indictor with $O_{ijwr}=1$ if plant $i$ and insect $j$ co-occur in week $w$ of year $r$. 
Motivated by this two-dimensional time structure in which networks can be located temporally by both the year and seasonality week, we structure autocorrelation using a separable covariance with seasonality and year dependence. The random and fixed effects are given weakly informative normal priors; details on the full model specification are provided in Appendix \ref{SM_app_priors}. 

The model above is fit using the proposed framework and a competitor provided by the dynamic latent factor model of \cite{Durante.etal:2014} using the latent and exemplar space dimension $K = H = 10$ for \texttt{NEX} and latent space dimension $H = 10$ for the dynamic latent factor model. 
To fit \texttt{NEX}, we generate $2000$ post-burn in posterior samples using Stan, while the same number of post-burn in samples are generated by fitting the dynamic latent factor model with the Polya-Gamma data augmentation Gibbs sampling scheme described in \cite{Durante.etal:2014}, modified to accommodate the bipartite data structure and relevant covariates. 

To compare the models, we perform 10-fold cross validation of the flower visitation data, and additionally utilize the pollen transport data for further out of sample validation. The existence of the pollen transport data from 2016 allows us to validate predictions made by the model trained on flower visitation data only adopting an approach similar to that utilized in \cite{Papadogeorgou2021}. We consider the posterior probability for all missing cells in the flower visitation network with corresponding edges in the pollen transport data (heldout edges), and the posterior probability for all such tuples that are missing in flower visitation network, but without observed interactions in the pollen transport network (heldout non-edges). The ratio of the mean posterior probability among heldout pollen transport edges to heldout pollen transport non-edges should be greater than one if the model successfully discriminates non-interactions from interactions.

We find strong evidence of overfitting for the dynamic latent factor model with superior performance of \texttt{NEX} in out-of-sample metrics. The dynamic latent factor model has a higher in-sample fit with a mean AUC of 0.98 versus 0.90 for \texttt{NEX} across cross validation replicates. However,
\texttt{NEX} has better out-of-sample performance as can be seen in Figure \ref{app_oos_boxplots}. While both models demonstrate the capacity to identify true interactions from the held-out pollen transport data, we see superior performance of \texttt{NEX} in recovering these held-out interactions, demonstrating that the more parsimonious model not only fits the original data better, but further generalizes to data collected using entirely different methods. This suggests superior ability of \texttt{NEX} to deliver generalizable insights from sparse data where other models over-fit to noise.

\begin{figure} 
	\begin{minipage}[t]{0.45\textwidth}
		\includegraphics[width=\textwidth]{./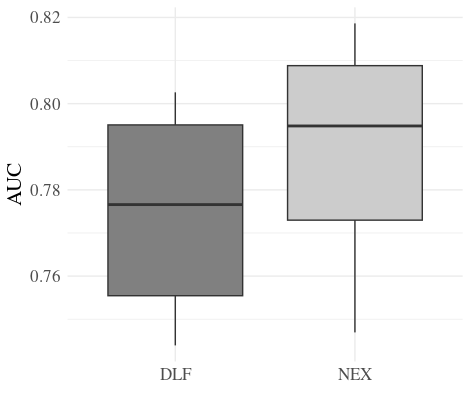}%
	\end{minipage}%
	\hspace{0.0\textwidth}
	\begin{minipage}[t]{0.45\textwidth}
		\includegraphics[width=\textwidth]{./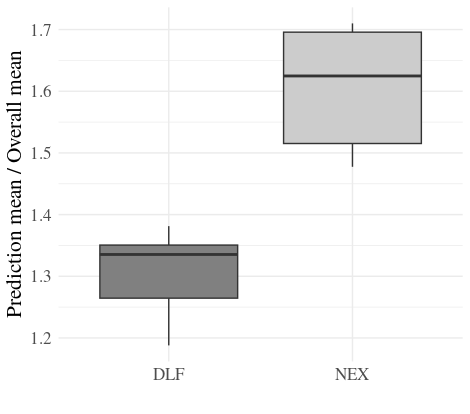}%
	\end{minipage}
	\caption{Cross Validation Results for the proposed Nested Exemplar Latent Space Model (\texttt{NEX}) and the dynamic latent factor model (\texttt{DLF}). We perform ten-fold cross validation and for each data set calculate the out-of-sample (AUC) and the ratio of the mean posterior interaction probability for held-out recorded pollen transport interactions to the mean posterior interaction probability for missing edges overall (right).}
    \label{app_oos_boxplots}
\end{figure}

\subsection{Model interpretation}

The application of \texttt{NEX} to the Zackenberg data yields a number of interesting insights into the stability of arctic pollination networks and the forces driving pollination activity. A key ecological question relates to the relationship between temperature and interaction intensity, with a positive relationship presumed based on a general trend of increasing interaction richness with temperature. However, in fitting the conditional \texttt{NEX} model, we find a small negative effect of temperature. This is likely the artifact of an influential outlier network from 2011 with a very high prevalence of interaction on an unseasonably cool day late in the season. Removing this term, the effect of temperature disappears, providing evidence that the impact of temperature on interactions works largely through species occurrence and not directly through pollination activity. Similarly, we find that a hypothesized seasonality effect producing greater interaction richness in the peak of summer disappears or even reverses when controlling for occurrence. 

To address potential concern that these results are driven by bias introduced by our occurrence construction, we perform sensitivity analysis under a variety of occurrence scenarios and find that the results persist; the details are provided in the Appendix \ref{SM_app_sens}. These findings are consistent with recent work showing the remarkable stability of arctic pollination networks over time \citep{Cirtwill}, and suggest that abiotic factors impact pollination activity primarily through occurrence. 

A crucial question in the study of pollination networks relates to the existence of unobserved interactions, both those that are occurring but not seen and those which are not yet occurring due to non-overlapping phenologies. Fitting \texttt{NEX} allows us to analyze posterior interaction probabilities across observed non-edges of both types. If we focus on potentially understudied insect families (defined to be those with limited pollination activity), we identify a large number of not previously observed high posterior probability interactions for the families Agromyzidae and Scathophagidae, both consisting of flies with a high potential to be missed by observing researchers, and Braconidae and Ichneumonidae, both parasitoid wasps where time spent on flowers limits time spent searching for hosts, making them harder to observe on the flowers. 

Finally, the model suggests a number of new pollination interactions which have not previously been possible due to non-overlapping phenologies. In particular, we find a high mean posterior probability of pollination of the arctic willow \textit{Salix arctica} by the very rare genus \textit{Atractodes}, given appropriate overlap of species occurrence. Additionally, we find a number of rare species to be likely but unobserved pollinators of the very briefly blooming \textit{Saxifraga hyperborea}. These results suggest potential redundancies in the pollination networks which can contribute to resilience of the arctic ecological community under destabilization from climate-induced phenology shifts. Importantly, they also provide testable hypotheses regarding unobserved but likely interactions, which can next be resolved by targeted experiments. See Appendix \ref{SM_app} for a detailed presentation of new high posterior probability interactions. 

\section{Discussion}

We introduce a novel class of models for dynamic networks, which dramatically reduce dimensionality relative to existing latent-space approaches. Although existing latent factor models borrow information across networks over time, they utilize independent Gaussian processes for each component of the latent space and each node in the network. In contrast, by utilizing a tensor factorization approach, the proposed model replaces these independent Gaussian processes with a small set of exemplar curves which are common across nodes and latent space components, yielding dramatic computational advantages and superior out-of-sample performance on both real-world and simulated networks. 

The application presented here demonstrates the broad suitability of the model to a number of challenging real-world network settings, which are often characterized by sparsity and observational bias. Careful modeling of a complex dynamic pollination network within the \texttt{NEX} framework delivers novel insights into the drivers of pollination activity, while suggesting future directions for research, including extension to networks with weighted or discrete-valued edges and to spatially-indexed replicates. Simultaneously, the computational efficiency of the model could facilitate a more in-depth treatment of observational bias, including formalizing uncertainty of co-occurrence through the addition a phenology sub-model, or the incorporation into more expansive joint species distribution models.  

\section{Acknowledgment}
This project has received funding from the European Research Council under the European Union’s
Horizon 2020 research and innovation programme (grant agreement No 856506) 
and from the Office of Naval Research (N00014-21-1-2510).
The authors express their gratitude to Otso Ovaskainen, Alyssa Cirtwell, and Bora Jin for their insightful suggestions and assistance with the complex Zackenberg data. 

\bibliographystyle{apalike}
\bibliography{paper-ref}

\begin{appendix}
\label{appendix}
\setcounter{theorem}{0}
\setcounter{proposition}{0}
\setcounter{corollary}{0}
\renewcommand{\theequation}{\Alph{section}.\arabic{equation}}

\section{Theoretical results} 
\label{SM_theory}
This section provides the proofs of the theorems, propositions and corollaries stated in Subsections \ref{sec:model:formulation}-\ref{subsec:prior} of the manuscript. These theorems prove two important results: first, we show that \texttt{NEX} is flexible enough to recover any continuously evolving dynamic network, and second, that the prior \eqref{eq:prior:1}-\eqref{last:prior} has full-support in that it assigns positive probability to arbitrarily small neighborhoods of any dynamic network process.

The following proofs are provided in the continuous time setting with $t\in\mathcal{T} \subset \R$ compact, but are easily adapted to the discrete time setting presented in the theoretical statements in Sections \ref{sec:model:formulation}-\ref{subsec:prior}. To highlight the continuous-time structure, we introduce a slight modification to the notation: a matrix at a given time $t$ is indicated with $S(t)$ as its entries are functions of $t\in\mathcal{T}$. For other components which depend on time, such as $W^X, W^Y$, we refer to component values with $w_k(t)$.  In the proofs that follow, we set the shrinkage parameters $\lambda_H = \lambda_K = 1$ and set $H$ and $K$ sufficiently large to allow recovery of any given tensor. Lastly, we assume, without loss of generality, that $N=M$ is such that there are an equal number of nodes of each type to simplify the notation. 

\begin{proposition}
\label{prop:1:cont}
Given a time-evolving matrix $S(t)\in\R^{N\times N}$, with $t\in\mathcal{T}$ compact, there exist finite integer values $H, K$ such that for every $t\in\mathcal{T}$ we have that
\begin{equation}
\label{eq:1:prop:1}
S(t) = \mu(t)\1_N\1_N^\top + X(t) Y(t)^{\top},
\end{equation}
where $\mu(t)$ is a time-varying intercept and $X(t), Y(t)$ are matrices of time-evolving latent vectors:
\begin{gather}
\label{eq:proof:1}    X(t) = \sum_{k=1}^K  w^X_k(t) U^X_k \otimes V^X_k  \\
\label{eq:proof:2}   Y(t) = \sum_{k=1}^K w^Y_k(t) U^Y_k \otimes V^Y_k,
\end{gather}
for $U^X_k$, $U^Y_k\in\R^N$ and $V^X_k$, $V^Y_k\in\R^H$, while $w^X_k(\cdot), w^Y_k(\cdot)$ are real scalar functions on $\mathcal{T}$.
\end{proposition}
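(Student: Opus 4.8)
The plan is to give a direct constructive proof; the statement is essentially a dimension-counting fact, so I would not expect to need anything beyond elementary linear algebra, and in particular neither the compactness of $\mathcal{T}$ nor any continuity of $S$ is needed here. Since $\mu(\cdot)$ is a free function, I would take $\mu(t)\equiv 0$ (equivalently, absorb the intercept into $S(t)$), reducing the goal to writing $S(t)=X(t)Y(t)^\top$ with $X(t),Y(t)$ of the prescribed form, using a single pair $(H,K)$ that serves both factors.

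First I would dispose of the inner-product layer: set $H=N$, $Y(t)\equiv I_N$, and $X(t):=S(t)$, so that $X(t)Y(t)^\top=S(t)I_N=S(t)$ for every $t\in\mathcal{T}$. The point of taking $Y(t)$ to be the fixed identity --- rather than, say, SVD factors of $S(t)$ --- is that it pushes all $t$-dependence into $X(t)=S(t)$, which then trivially inherits whatever regularity (continuity, measurability) $S$ has, so no smoothness question in $t$ ever arises.

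Next I would realize each of $X(t)$ and $Y(t)$ in the constrained form $\sum_{k=1}^K w_k(t)\,U_k\otimes V_k$. The key observation is that, for fixed vectors $U_k\in\R^N$, $V_k\in\R^H$ and varying scalars $w_k$, the set of matrices $\sum_{k=1}^K w_k\,U_k\otimes V_k$ is exactly the linear span of the rank-one matrices $\{U_k\otimes V_k\}_{k=1}^K$. Hence, taking $K=NH=N^2$ and letting $(U_k,V_k)$ enumerate the standard basis --- index $k\leftrightarrow(i_k,h_k)\in[N]\times[H]$, $U_k=e_{i_k}$, $V_k=e_{h_k}$ --- the matrices $U_k\otimes V_k$ form a basis of $\R^{N\times H}$, so any matrix, in particular the time-varying $X(t)=S(t)$, lies in their span: one simply sets $w^X_{k(i,h)}(t):=S(t)_{ih}$, where $k(i,h)$ denotes the index of the pair $(i,h)$. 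The same enumeration handles $Y(t)=I_N$ by setting $w^Y_{k(i,h)}(t):=\delta_{ih}$ (so only $N$ of the $K$ coordinate functions are nonzero). All the $w$-functions so defined inherit the regularity of $S$. Assembling the pieces yields $\mu(t)\1_N\1_N^\top+X(t)Y(t)^\top=S(t)$ with the finite choices $H=N$ and $K=N^2$, which is \eqref{eq:1:prop:1}.

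I do not anticipate a genuine obstacle; the only points needing a little care are (a) that a single $(H,K)$ must work simultaneously for the $X$- and $Y$-decompositions --- handled by the common choice $H=N$, $K=N^2$ and the common standard-basis enumeration --- and (b) in the continuous-time reading, that the construction preserves regularity in $t$, which the choice ``$Y$ fixed, $X=S$'' guarantees. If desired, one can remark afterward that these bounds are far from tight ($H$ need only be $\max_{t}\mathrm{rank}\,S(t)$, and $K$ can be reduced by letting $X$ and $Y$ share the representation burden), but optimizing them is unnecessary for the statement as given.
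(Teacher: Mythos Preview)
Your proof is correct and, in its second half, coincides with the paper's: both expand $X(t)$ (and $Y(t)$) in the canonical rank-one basis $e_n^N\otimes e_h^H$ to obtain $K=NH$. The difference lies in the first step. The paper factors $S(t)=L(t)D(t)R(t)^\top$ by a time-varying SVD, sets $H_0=\max_{t\in\mathcal{T}}\mathrm{rank}\,S(t)$, and invokes compactness of $\mathcal{T}$ to argue this maximum exists before padding to a common $H\ge H_0$. Your route --- $Y(t)\equiv I_N$, $X(t)=S(t)$, $H=N$ --- is more elementary: it sidesteps the SVD entirely, makes the compactness hypothesis visibly unnecessary (since $\mathrm{rank}\,S(t)\le N$ always), and, as you note, automatically transfers any regularity of $S$ to the $w^X_k$ without worrying about continuity of singular vectors. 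The paper's construction has the mild advantage of hinting at the tighter bound $H=\max_t\mathrm{rank}\,S(t)$, but for the proposition as stated your argument is cleaner.
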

\begin{proof}
Assume without loss of generality that $\mu(t)=0$ for every $t\in\mathcal{T}$. Consider the time evolving singular value decomposition of the matrix $S(t)$ with rank $H(t)$
\begin{equation*}
S(t) = L(t) D(t) R(t)^{\top},\qquad t\in\mathcal{T}
\end{equation*}
with $L(t), R(t)\in\R^{N\times H(t)}$, $D(t)\in\R^{H(t)\times H(t)}$. Define $H_0=\max_{t\in\mathcal{T}} H(t)$, which exists because $\mathcal{T}$ is compact and the network has finite dimension. Set $H\geq H_0$ and consider the block matrices
\begin{equation*}
X(t) = \{L(t) D(t)^{1/2}: 0_{N\times \{H-H(t)\}}\},\;\; Y(t) =  \{R(t) D(t)^{1/2}: 0_{N\times \{H-H(t)\}}\}.
\end{equation*}
It is straightforward to see that there exists $K$ such that \eqref{eq:proof:1}-\eqref{eq:proof:2} hold: let $e_s^D$ denote $s$-th canonical base in $\R^D$, and consider the set of rank-1 matrices $B^{nk} = e^N_n\otimes e^H_h\in\R^{N\times H}$. We can express $X(t)$ as
\begin{equation}
\label{eq:expanded}
X(t) = \sum_{n=1}^N\sum_{h=1}^H x_{nh}(t) B^{nh} = \sum_{n=1}^N\sum_{h=1}^H  x_{nh}(t) e^N_n\otimes e_h^H,
\end{equation}
which is equivalent to \eqref{eq:proof:1} with $k=N(h-1)+n$ and $K=NH$. We proceed in the same manner for $Y(t)$.
\end{proof}
\begin{remark}
\label{rem:1}
    If we have a discrete number $T$ of time points, \eqref{eq:proof:1}-\eqref{eq:proof:2} are equivalent to the two CP decompositions  \eqref{nex:dec}-\eqref{eq:y} for the tensors $\X,\Y$ presented in Section \ref{sec:2}. \eqref{eq:expanded} then represents the standard worst-case scenario upper bound on the rank $K$ for the CP decomposition on a generic tensor $\mathcal{X}\in\R^{N\times H\times T}$.
\end{remark}
This result for $S(t)$ directly extends to $\Pi(t)$, as we summarize below. 
\begin{corollary}
    Given a time-evolving link probability matrix $\Pi(t)\in\R^{N\times N}$, with $t\in\mathcal{T}$ compact, there exist finite integer values $H, K$ such that for every $t\in\mathcal{T}$ we have that
    $$ \Pi(t) = f\{S(t)\}, \;\;t\in\mathcal{T},$$
    where $S(t)$ is defined as in \eqref{eq:1:prop:1}-\eqref{eq:proof:2} and $f(\cdot)$ is the logit function applied elementwise.
\end{corollary}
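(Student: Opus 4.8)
The plan is to obtain this as an immediate consequence of Proposition~\ref{prop:1:cont} by pushing $\Pi(t)$ through the logit transform. Reading ``link probability matrix'' as having all entries strictly in $(0,1)$ --- which is exactly what is needed for the exact identity $\Pi(t)=f\{S(t)\}$ to hold, since the elementwise link $f$ in the statement (the logistic map, inverse to the logit) has range $(0,1)$ --- the idea is to build a real matrix-valued function $S^{0}(\cdot)$ on $\mathcal{T}$ whose elementwise image under $f$ equals $\Pi(\cdot)$, and then apply the proposition to the family $\{S^{0}(t)\}_{t\in\mathcal{T}}$.

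Concretely, first I would set, for each $t\in\mathcal{T}$,
\[ S^{0}_{ij}(t)=\log \Pi_{ij}(t)-\log\{1-\Pi_{ij}(t)\},\qquad i,j\in[N], \]
the elementwise logit of $\Pi(t)$; this is a well-defined real $N\times N$ matrix precisely because $\Pi_{ij}(t)\in(0,1)$, and by construction $f\{S^{0}(t)\}=\Pi(t)$ for every $t$. Next I would invoke Proposition~\ref{prop:1:cont} on the time-evolving matrix $S^{0}(t)$: it supplies finite integers $H,K$ (one may take $H=\max_{t\in\mathcal{T}}\mathrm{rank}\,S^{0}(t)\le N$ and $K=NH$ as in its proof), vectors $U^{X}_k,U^{Y}_k\in\R^{N}$ and $V^{X}_k,V^{Y}_k\in\R^{H}$, and scalar functions $w^{X}_k(\cdot),w^{Y}_k(\cdot)$ on $\mathcal{T}$ such that, with $\mu(t)\equiv 0$,
\[ S^{0}(t)=\mu(t)\1_N\1_N^\top+X(t)Y(t)^\top,\qquad t\in\mathcal{T}, \]
with $X(t),Y(t)$ of the form \eqref{eq:proof:1}--\eqref{eq:proof:2}. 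Defining $S(t)$ to be this right-hand side (a constant-zero intercept is still a valid time-varying intercept) and applying $f$ elementwise then yields $f\{S(t)\}=f\{S^{0}(t)\}=\Pi(t)$ for all $t\in\mathcal{T}$, i.e.\ the claimed representation, with the same $H,K$ produced by the proposition.

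There is essentially no deep obstacle here: the whole content is in Proposition~\ref{prop:1:cont}, and the corollary only adds the observation that the logit is a bijection between $\R$ and $(0,1)$. The one point that deserves care is the boundary behaviour of the logit --- if some $\Pi_{ij}(t)$ is allowed to equal $0$ or $1$, then $S^{0}(t)$ is not finite and the exact identity fails, so in that case one must retreat to the approximate, full-support statement (Corollary~\ref{cor:2}) rather than the exact one. It is also worth emphasizing that Proposition~\ref{prop:1:cont} places no continuity or smoothness requirement on the coordinate functions $w^{X}_k,w^{Y}_k$, so the construction uses only the pointwise values of $S^{0}(t)$ and no regularity of the map $t\mapsto S^{0}(t)$ need be checked.
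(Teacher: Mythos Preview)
Your proposal is correct and follows essentially the same approach as the paper: the paper's proof is a single sentence stating that the result ``follows immediately from Proposition~\ref{prop:1} and from the fact that the logit link function $f(\cdot)$ is a continuous, one-to-one, and increasing function.'' You have simply made explicit what this sentence leaves implicit --- the construction of $S^0(t)$ via the elementwise logit and the subsequent invocation of Proposition~\ref{prop:1:cont} --- and your remarks on the boundary behaviour at $0$ and $1$ and on the absence of any regularity requirement on $t\mapsto S^0(t)$ are careful observations that the paper does not spell out.
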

\begin{proof}
    The proof follows immediately from Proposition \ref{prop:1} and from the fact that the logit link function $f(\cdot)$ is a continuous, one-to-one, and increasing function.
\end{proof}
Proposition \ref{prop:1} ensures that the continuous-time \texttt{NEX} decomposition \eqref{eq:proof:1}-\eqref{eq:proof:2} is flexible enough to represent any dynamic network given sufficiently large $H,K$. From a Bayesian perspective, this result is relevant only if the prior on the model parameters has full support, that is, that there is a positive probability of generating $\{\Pi(t), t\in\mathcal{T}\}$ that is arbitrarily close to any true $\{\Pi_0(t), t\in\mathcal{T}\}$. We set $H\geq N$ and $K\geq NH$ which is sufficient to allow recovery of any dynamically evolving matrix $\mathcal{S}_0$, as we see from the proof of Proposition \ref{prop:1}; these values are chosen to simplify the following proof, but as hinted in Remark \ref{rem:1} are in practice unnecessarily large.  The following Theorem \ref{thm:1} and Corollary \ref{cor:2} prove precisely the full support of the previous \eqref{eq:prior:1}-\eqref{last:prior}. As before, we first prove the complete support for the log-odds process $S(t)$, and then extend the result to $\Pi(t)$ through the continuity of the logit link $f(\cdot)$.
\begin{theorem}
\label{thm:1:cont}
    Let $p_{\mathcal{S}}$ be the prior induced on $\{S(t), t\in \mathcal{T}\}$ by the priors specified on the latent factors \eqref{eq:prior:1}-\eqref{first:prior} and the time-varying intercept \eqref{last:prior}. 
    If $\mathcal{T}$ is compact, then for every real matrix $S_0(t)$ with $t \in \mathcal{T}$ continuous, and for every $\epsilon>0$  we have
    $$ p_{\mathcal{S}}\Big\{\underset{t\in \mathcal{T}}{\sup}\;\; 
     || S(t) - S_0(t)||_2 < \epsilon \Big\} > 0.$$
\end{theorem}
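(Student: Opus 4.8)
The plan is to exhibit a single point of parameter space that maps \emph{exactly} to the target path $S_0(\cdot)$, and then conclude by continuity of the reconstruction map together with positivity of the prior on open neighbourhoods of that point. First I would fix a convenient exact representation of $S_0$: take $\mu^*(t)\equiv 0$, $H=N$, $X^*(t)=S_0(t)$, $Y^*(t)=I_N$, so that $\mu^*(t)\1_N\1_N^\top+X^*(t)Y^*(t)^\top=S_0(t)$ for every $t\in\mathcal{T}$. Following the canonical-basis device in the proof of Proposition~\ref{prop:1:cont}, set $K\ge N^2$, index the first $N^2$ components by pairs $(n,h)\in[N]\times[N]$, and specify the target values $U^{*X}_{(n,h)}=e^N_n$, $V^{*X}_{(n,h)}=e^H_h$, $w^{*X}_{(n,h)}(t)=(S_0)_{nh}(t)$, so that $X^*(t)=\sum_{n,h}(S_0)_{nh}(t)\,e^N_n\otimes e^H_h=S_0(t)$; likewise $U^{*Y}_{(n,h)}=e^N_n$, $V^{*Y}_{(n,h)}=e^H_h$, with $w^{*Y}_{(n,h)}(t)\equiv 1$ if $n=h$ and $\equiv 0$ otherwise, so that $Y^*(t)=\sum_n e^N_n\otimes e^N_n=I_N$; assign identically zero weight functions to any remaining components $k>N^2$. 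All target weight functions and $\mu^*$ are continuous on $\mathcal{T}$.

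Second I would check that the reconstruction map $\Phi$ sending $\big(\{U^X_k\},\{V^X_k\},\{W^X_k\},\{U^Y_k\},\{V^Y_k\},\{W^Y_k\},\mu\big)$ to the path $\{S(t)\}_{t\in\mathcal{T}}$ defined through \eqref{eq:1:prop:1}-\eqref{eq:proof:2} is continuous from the product of Euclidean spaces (for the finite-dimensional factors) and $(C(\mathcal{T}),\|\cdot\|_\infty)$ (for the Gaussian-process factors) into $\big(C(\mathcal{T};\R^{N\times N}),\sup_t\|\cdot\|_2\big)$. This is routine: each entry of $S(t)$ is a finite sum of terms $w^X_k(t)\,w^Y_l(t)\,U^X_{ik}V^X_{hk}U^Y_{jl}V^Y_{hl}$, and on any bounded neighbourhood of $\theta^*=(U^{*X},V^{*X},W^{*X},U^{*Y},V^{*Y},W^{*Y},\mu^*)$ all factors are uniformly bounded, so pointwise multiplication and finite summation are jointly sup-norm continuous. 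Hence $\Phi^{-1}\big(\{\sup_t\|S(t)-S_0(t)\|_2<\epsilon\}\big)$ is open, and since $\Phi(\theta^*)=S_0(\cdot)$ it contains a basic product neighbourhood of $\theta^*$: Euclidean balls around the $U^*,V^*$ entries times sup-norm balls around the continuous functions $w^{*X}_k,w^{*Y}_k,\mu^*$.

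Finally I would show that the prior $\nu$ on parameter space — the product of the Gaussian laws \eqref{eq:prior:1}-\eqref{first:prior} and the Gaussian-process law \eqref{last:prior} — charges this neighbourhood. Every nondegenerate Gaussian on $\R^d$ gives positive mass to any nonempty open set, which handles the $U,V$ balls. For the $W^X,W^Y,\mu$ factors, the relevant target functions — $(S_0)_{nh}(\cdot)$, the constants $0$ and $1$, and $0$ — are continuous on the compact set $\mathcal{T}$ and hence lie in the topological support of the corresponding Gaussian process, because for the covariance kernels used here (e.g.\ squared-exponential, or the exponential kernel $\exp(-kD)$ on a compact interval) the reproducing-kernel Hilbert space is dense in $(C(\mathcal{T}),\|\cdot\|_\infty)$, so each sup-norm ball carries positive probability. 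By independence of the factors $\nu$ assigns positive mass to the whole neighbourhood, and since $p_{\mathcal{S}}=\nu\circ\Phi^{-1}$ we obtain
$$p_{\mathcal{S}}\Big\{\sup_{t\in\mathcal{T}}\|S(t)-S_0(t)\|_2<\epsilon\Big\}\ \ge\ \nu\big(\text{neighbourhood of }\theta^*\big)\ >\ 0;$$
Corollary~\ref{cor:2} then follows from the continuity of the logit link exactly as its statement indicates.

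The main obstacle — essentially the only non-mechanical ingredient — is the Gaussian-process full-support claim: one needs the priors on $W^X,W^Y,\mu$ to put positive probability on every sup-norm neighbourhood of an arbitrary continuous function on $\mathcal{T}$, i.e.\ the RKHS of the chosen covariance must be dense in $C(\mathcal{T})$. This is standard for the smooth kernels employed, but it is the one step where the covariance specification genuinely matters; everything else reduces to continuity of a finite multilinear map and positivity of Gaussian measures on open sets.
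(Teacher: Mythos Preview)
Your argument is correct and takes a genuinely different route from the paper. The paper proceeds locally: it uses compactness of $\mathcal{T}$ to extract a finite subcover by $\epsilon_0$-balls, then on each ball splits $\|S(t)-S_0(t)\|$ via the triangle inequality into a deterministic continuity term for $S_0$, an almost-sure path-continuity term for the random $S(\cdot)$, and a fixed-time term $\|S(t_0)-S_0(t_0)\|<\epsilon/3$, the last handled through full support of the finite-dimensional Gaussian marginals and prior independence of the factors. Your argument is global: you exhibit a parameter configuration $\theta^*$ hitting $S_0(\cdot)$ exactly (reusing the canonical-basis device from Proposition~\ref{prop:1:cont}), verify sup-norm continuity of the reconstruction map $\Phi$, and invoke positivity of the product prior on an open neighbourhood of $\theta^*$. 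The paper's route is more elementary in that it appeals only to pointwise Gaussian full support plus almost-sure path continuity, whereas yours requires the stronger statement that the Gaussian-process priors have support equal to all of $C(\mathcal{T})$ --- equivalently RKHS density in $C(\mathcal{T})$ --- which you rightly flag as the one kernel-dependent step. In return your approach is shorter, avoids the finite-subcover and triple-triangle bookkeeping, and makes transparent exactly which property of the covariance is doing the work; since the exponential and squared-exponential kernels named in the paper do enjoy this density property, both routes reach the same conclusion.
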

\begin{proof} 
Since $\mathcal{T}$ is compact, for every $\epsilon_{0}>0$ there exists an open covering of $\epsilon_0$-balls \\ 
$B_{\epsilon_0}(t_0)=\{t: ||t-t_0 ||_2<\epsilon_0 \}$ with a finite subcover such that $\mathcal{T} \subset \cup_{t_0 \in \mathcal{T}_0}B_{\epsilon_0}(t_0)$, where $|\mathcal{T}_0|<\infty$. Then
$$ \Pi_{S} \Big( \sup_{t \in \mathcal{T}}||S(t)-S_0(t) ||_2<\epsilon \Big)= \Pi_{S} \Big(\max_{t_0 \in \mathcal{T}_0} \sup_{t \in B_{\epsilon_0}(t_0)}||S(t)-S_0(t) ||_2<\epsilon \Big).$$
Define $Z(t_0)=\sup_{t \in B_{\epsilon_0}(t_0)}||S(t)-S_0(t) ||_2$. Since
$$\Pi_{S} \big(\max_{t_0 \in \mathcal{T}_0} Z(t_0)<\epsilon \big)>0 \Longleftrightarrow \Pi_{S} \left(Z(t_0)<\epsilon \right) >0, \ \text{ for all } t_0 \in \mathcal{T}_0,$$
we only need to look at each $\epsilon_0$-ball independently. Using the triangle inequality twice we obtain a lower bound on the desired probability with 
\begin{equation}
    \Pi_{S} \left(||S(t_0)-S_0(t_0) ||_2\;+\sup_{t \in B_{\epsilon_0}(t_0)}||S_0(t_0)-S_0(t) ||_2\;+ \sup_{t \in B_{\epsilon_0}(t_0)}||S(t_0)-S(t) ||_2<\epsilon\right), \nonumber
\end{equation}
which then can be further decomposed as 
\begin{multline}
\label{fact}
    \Pi_{S}\left(\sup_{t \in B_{\epsilon_0}(t_0)}||S_0(t_0)-S_0(t) ||_2<\frac{\epsilon}{3}\right) \\
\Pi_{S}\left( \sup_{t \in B_{\epsilon_0}(t_0)}||S(t_0)-S(t) ||_2<\frac{\epsilon}{3} \;\Big\vert \;|| S(t_0) - S_0(t_0) ||_2 < \frac{\epsilon}{3}\right)
\Pi_{S} \left(||S(t_0)-S_0(t_0) ||_2<\frac{\epsilon}{3}\right)  
\end{multline}
since the first term states the continuity property for a deterministic component, which is independent from the other two terms, while the latter two follow from the definition of conditional probabilities. We now proceed by analyzing each term in \eqref{fact} separately. Based on the continuity of $S_{0}(\cdot)$, for all $\epsilon/3>0$, there exists an $\epsilon_{0,1}>0$ such that:
$$||S_0(t_0)-S_0(t_0) ||_2<\frac{\epsilon}{3}, \quad ||t-t_0 ||_2<\epsilon_{0,1}\implies \Pi_{S}\left(\sup_{t \in B_{\epsilon_{0,1}}(t_0)}||S_0(t_0)-S_0(t) ||_2<\frac{\epsilon}{3}\right)=1. $$
The second term in \eqref{fact} states the continuity property of $S(t)$ in a neighborhood of $t_0$, with the conditioning
 event restricting the analysis to the subset of all the realizations of $S(t)$ having $S(t_0)$ lying in a $\epsilon/3$-neighborhood of $S_0(t_0)$. We first prove the continuity property of $S(t)$ in its unrestricted sample space. The
 continuity in a subset will follow as a consequence. 
 
 Given the Gaussian process prior on the time dependent components of the \texttt{NEX} factors, 
 $$ [X(t) Y(t)^\top]_{ij} = \sum_{k,k'}^K w^X_{k}(t)w^Y_{k'}(t) \langle V_{k}^X, V_{k'}^Y\rangle u^X_{ik}u^Y_{jk'}$$
 represents a finite sum in which the temporal components are almost surely continuous and interact through a product; thus the resulting elements of the matrix are also almost surely continuous across time. Therefore, $S(t)=\mu(t) +X(t)Y(t)^\top$ is almost surely continuous on $\mathcal{T}$ since also the baseline $\mu(\cdot)$, given the Gaussian process prior, is itself almost surely continuous.
Hence again for all $\epsilon/3>0$, there exists an $\epsilon_{0,2}>0$ such that
$$\Pi_{S}\left(\sup_{t \in B_{\epsilon_{0,2}}(t_0)}||S(t_0)-S(t) ||_2<\frac{\epsilon}{3}\right)=1.$$
Since we proved that all realizations from $S(t)$ are continuous in a neighborhood of $t_0$, the same will
be true for the subset of the sample space induced by the condition $||S(t_0) - S_0(t_0)||_2< \epsilon/3$. 
To examine the last term, first note that
\begin{multline*}
\Pi_{S}\left(||S(t_0)-S_0(t_0) ||_2 <\frac{\epsilon}{3} \right)= \\
=\Pi_{S} \left(||\mu(t_0)\1_N\1_N^\top  +X(t_0) Y(t_0)^\top -\mu_0(t_0)\1_N\1_N^\top -X_0(t_0) Y_0(t_0)^{\top}||_2<\frac{\epsilon}{3}\right).
\end{multline*}
Thus, using triangle inequality, we can bound this probability by
\begin{eqnarray*}
&&\Pi_{S}\left(||S(t_0)-S_0(t_0) ||_2 <\frac{\epsilon}{3} \right)\\
&\geq& \Pi_{S} \left(||X(t_0)Y(t_0)^\top-X_0(t_0)Y_0(t_0)^{\top}||_2<\frac{\epsilon}{6}\right)\Pi_{\mu}\left(||\mu(t_0)\1_N\1_N^\top -\mu_0(t_0)\1_N\1_N^\top ||_2<\frac{\epsilon}{6}\right).
\end{eqnarray*}
Based on the support of the Gaussian prior,
$$\Pi_{\mu}\left(||\mu(t_0)\1_N\1_N^\top -\mu_0(t_0)\1_N\1_N^\top ||_2<\frac{\epsilon}{6}\right)=\Pi_{\mu}\left(|( \mu(t_0)-\mu_0(t_0)|<\frac{\epsilon}{6\sqrt{N}}\right)>0.$$
Given Proposition \ref{prop:1}, there exist $\epsilon_X, \epsilon_Y$ such that given
$$(||X(t_0)-X_0(t_0)||_2 < \epsilon_X) \cap (||Y(t_0)-Y_0(t_0)||_2 < \epsilon_Y),$$
we have $||X(t_0)Y(t_0)^\top-X_0(t_0)Y_0(t_0)^{\top}||_2 < \epsilon/6$. Given the prior independence of $X(t_0), Y(t_0)$ we can further lower bound the remaining term with
$$ p_X(||X(t_0)-X_0(t_0)||_2 < \epsilon_X)  p_Y(||Y(t_0)-Y_0(t_0)||_2 < \epsilon_Y). $$
Focusing on the first term, since $$X(t_0)=\sum_{k=1}^K w^X_{k}(t_0)U_k^X (V_{k}^X)^{\top}$$ is a sum of $K$ independent terms, we once again invoke the triangle inequality to obtain
$$ p_X(||X(t_0)-X_0(t_0)||_2 < \epsilon_X) \geq \prod_{k=1}^K p_X\left(|| w^X_{k}(t_0)U^X_k (V^X_k)^\top - w^X_{0,k}(t_0) W^X_{0,k}(V^X_{0,k})^\top||_2 < \frac{\epsilon_X}{K}\right).$$
Since $w^X_{k}(t_0) U^X_k (V^X_k)^\top$ has full support on the space of rank-1 $[N,H]$ matrices, we know that for every $k$
$$ p_X\left(|| w^X_{k}(t_0)U^X_k (V^X_k)^\top - w^X_{0,k}(t_0) W^X_{0,k}(V^X_{0,k})^\top||_2 < \frac{\epsilon_X}{K}\right) > 0.$$
Thus $p_X(||X(t_0)-X_0(t_0)||_2 < \epsilon_X) > 0$. Similarly, it can be shown that $p_Y(||Y(t_0)-Y_0(t_0)||_2 < \epsilon_Y)>0$. Additionally, given the previously demonstrated large support for the common mean $\mu(\cdot)$, we obtain 
$$\Pi_{S}\left(||S(t_0)-S_{0}(t_0) ||_2 <\frac{\epsilon}{3} \right) > 0.$$
To conclude, set $\epsilon_0= \min(\epsilon_{0,1}, \epsilon_{0,2})$, such that all terms in \eqref{fact} are positive. 
\end{proof}

\begin{corollary}
    \label{cor:2:cont}
    Let $p_{\Pi}$ be the prior induced on $\{\Pi(t), t\in \mathcal{T}\}$ by the priors specified on the latent factors \eqref{eq:prior:1}-\eqref{first:prior} and time varying intercepts \eqref{last:prior}. If $\mathcal{T}$ is compact, then for every real probability matrix $\Pi_0(t)$ with $t \in \mathcal{T}$ continuous, and for every $\delta>0$  we have
    $$ p_{\Pi}\Big\{\underset{t\in \mathcal{T}}{\sup}\;\; 
     || \Pi(t) - \Pi_0(t)|| < \delta \Big\} > 0.$$
\end{corollary}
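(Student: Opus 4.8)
The plan is to deduce this from Theorem~\ref{thm:1:cont} by transporting the full-support statement through the link. Write $\sigma = f^{-1}$ for the inverse-logit (expit) map applied entrywise, so that $\Pi(t) = \sigma\{S(t)\}$ and $p_\Pi$ is precisely the pushforward of $p_{\mathcal{S}}$ under $\{S(t)\}\mapsto\{\sigma(S(t))\}$; hence any event of positive $p_{\mathcal{S}}$-probability described in terms of $S$ yields, via its image, an event of positive $p_\Pi$-probability described in terms of $\Pi$. The two ingredients I will combine are (i) Theorem~\ref{thm:1:cont}, which gives full support for $S$ around \emph{any} continuous real matrix-valued target, and (ii) the fact that $\sigma$ has derivative $\sigma(1-\sigma)\le 1/4$, hence is $\tfrac14$-Lipschitz on all of $\R$, so that applying it entrywise contracts distances by the same factor.

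First I would deal with the boundary of $(0,1)$. The natural target to feed Theorem~\ref{thm:1:cont} is $S_0(t) := \mathrm{logit}\{\Pi_0(t)\}$, but this is a real matrix only if the entries of $\Pi_0(t)$ remain strictly inside $(0,1)$. So, given $\delta>0$, I first fix a small $\eta\in(0,1/2)$ and replace $\Pi_0$ by the affinely shrunk version $\tilde\Pi_0(t) = (1-2\eta)\Pi_0(t) + \eta\,\1_N\1_N^\top$, which is continuous, takes values in $[\eta,1-\eta]^{N\times N}$, and satisfies $\sup_{t\in\mathcal{T}}\|\tilde\Pi_0(t)-\Pi_0(t)\| \le c_N\,\eta$ for a dimension-dependent constant $c_N$; choosing $\eta$ small enough makes this bound less than $\delta/2$. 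Since $\mathrm{logit}$ is smooth on the compact interval $[\eta,1-\eta]$ and $\tilde\Pi_0(\cdot)$ is continuous on the compact set $\mathcal{T}$, the function $S_0(t) := \mathrm{logit}\{\tilde\Pi_0(t)\}$ is a genuine continuous real matrix-valued function on $\mathcal{T}$ — exactly the kind of target to which Theorem~\ref{thm:1:cont} applies.

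Next I would apply Theorem~\ref{thm:1:cont} to this $S_0$ with tolerance $\epsilon := 2\delta$, obtaining an event $E = \{\sup_{t\in\mathcal{T}}\|S(t)-S_0(t)\|_2 < 2\delta\}$ with $p_{\mathcal{S}}(E) > 0$. On $E$, applying the $\tfrac14$-Lipschitz map $\sigma$ entrywise contracts the Frobenius norm (and hence, by equivalence of norms in finite dimension, whatever norm $\|\cdot\|$ is used in the statement, up to a harmless constant that can be absorbed into $\epsilon$) by the factor $\tfrac14$, so $\sup_{t}\|\Pi(t)-\tilde\Pi_0(t)\| = \sup_t\|\sigma\{S(t)\}-\sigma\{S_0(t)\}\| \le \tfrac14\sup_t\|S(t)-S_0(t)\| < \delta/2$ throughout $E$. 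Combining with the first step by the triangle inequality gives $\sup_{t\in\mathcal{T}}\|\Pi(t)-\Pi_0(t)\| < \delta$ on all of $E$, and pushing $E$ forward as noted above yields $p_\Pi\{\sup_{t\in\mathcal{T}}\|\Pi(t)-\Pi_0(t)\| < \delta\} \ge p_{\mathcal{S}}(E) > 0$, which is the assertion.

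I do not expect a serious obstacle: essentially all the content is in Theorem~\ref{thm:1:cont}, and what remains is a routine change of variables. The one point requiring care — and the reason this is slightly more than the one-line argument used for the finite-rank Corollary~\ref{cor:1} — is that the reduction $S_0 = \mathrm{logit}\{\Pi_0\}$ is not directly licit when $\Pi_0$ has entries equal to $0$ or $1$; the preliminary affine shrinkage (with $\eta$ chosen \emph{after} $\delta$) is what repairs this, and one must keep this perturbation uniform in $t$ so that it survives inside the $\sup_{t\in\mathcal{T}}$. A secondary bookkeeping item is to reconcile the (unlabelled) norm in the corollary's statement with the $\|\cdot\|_2$ of Theorem~\ref{thm:1:cont}, which is immediate from finite-dimensional norm equivalence but should be stated.
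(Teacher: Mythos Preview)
Your proposal is correct and follows the same core idea as the paper's proof: transport the full-support statement of Theorem~\ref{thm:1:cont} through the (uniformly continuous) link. Your execution is in fact more careful than the paper's terse argument, which neither explicitly defines $S_0$ nor addresses the case where $\Pi_0$ touches the boundary of $(0,1)$; your preliminary affine shrinkage and use of the explicit $\tfrac14$-Lipschitz bound close precisely that gap.
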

\begin{proof}
    Given the logit link function $f(\cdot)$ is continuous and one-to-one, we have that for every $\delta>0$, there exists $\epsilon>0$ such that
\begin{equation}
\label{pos}
     \underset{t\in\mathcal{T}}{\sup} || f\{S(t)\}-f\{S_0(t)\}||_2 =\underset{t\in\mathcal{T}}{\sup} || \Pi(t)-\Pi_0(t)||_2 < \delta,
\end{equation}
    for all $S(t)$ having $\sup_{t\in\mathcal{T}} || S(t)-S_0(t)||_2 < \epsilon$, where $f\{S(t)\}$ indicates elementwise application of $f(\cdot)$ to the entries of $S(t)$. By Theorem \ref{thm:1} we know that $\sup_{t\in\mathcal{T}} || S(t)-S_0(t)||_2 < \epsilon$ has positive probability, which hence directly implies that also ${\sup}_{t\in\mathcal{T}} || \Pi(t)-\Pi_0(t)||_2 < \delta$ has non-zero probability, concluding the proof. 
\end{proof}

\section{Simulation studies}
\subsection{Simulations from the \texttt{NEX} model: additional details } \label{SM_SimulationNex}
In this section we provide additional details on the simulation study in Section \ref{SimulationNex}. In selecting the hyperparameters for the non-time factor matrices (mode 1 and mode 2) for the tensors $\mathcal{X}, \mathcal{Y}$, we utilize a heuristic which involves setting $K^*$, a user-defined estimate of the rank of the CP decompositions; this estimate should be based on the complexity of the dependence structure in exploratory data analysis, with a $K^* \leq 10$ appropriate for most applications. In this case, we suppose that our user-defined estimate of the dimension of the exemplar space is $K^* = 7$. 

We set the prior variance on the elements of the non-time factor matrices to be $\sigma^2 = 1/\sqrt{K^*}$ such that the elements of the tensors $\mathcal{X}, \mathcal{Y}$ are of order one a priori. Additionally, we utilize the following hyperparameters for the double-shrinkage prior \eqref{mgp:prior}-\eqref{mgp:prior:2}: $\alpha_1^K = \alpha_1^H=7$, $\beta_1^K=\beta_1^H=6$, $\alpha_2^K=2.5$, $\beta_2^K=7.5$, $\alpha_2^H=3$ and $\beta_2^H=12$. We set $\alpha_2^K / \beta_2^K > \alpha_2^H / \beta_2^H$ to favor a greater shrinkage in the latent dimension relative to the exemplar dimension, consistent with the fact that $H\leq K$. For the prior mean of each time-varying intercept, we utilize the empirical value of interaction prevalence: 
\begin{equation}
\label{eq:mu:0}
\mu_0=\log\left(\frac{n_{obs}/n_{tot}}{1-n_{obs}/n_{tot}}\right),
\end{equation}
where $\Omega_{ijt}$ is the Boolean tensor indexing the entries of our networks which are missing, $n_{obs}=\sum_{ijt} \Omega_{ijt} \mathcal{A}_{ijt}$ and $n_{tot}=\sum_{ijt} \Omega_{ijt}$.

We generate a single chain of $4000$ Hamiltonian Monte Carlo samples with \texttt{adapt delta} $= 0.95$ and \texttt{max treedepth} $= 12$, using the first half as warm-up. Markov Chain Monte Carlo diagnostics are satisfactory with good mixing of all entries of $\Pi$, and only a few sporadic divergences.

\begin{figure}[htp!]
\centering
	\begin{minipage}[t]{0.40\textwidth}
		\includegraphics[width=\textwidth]{./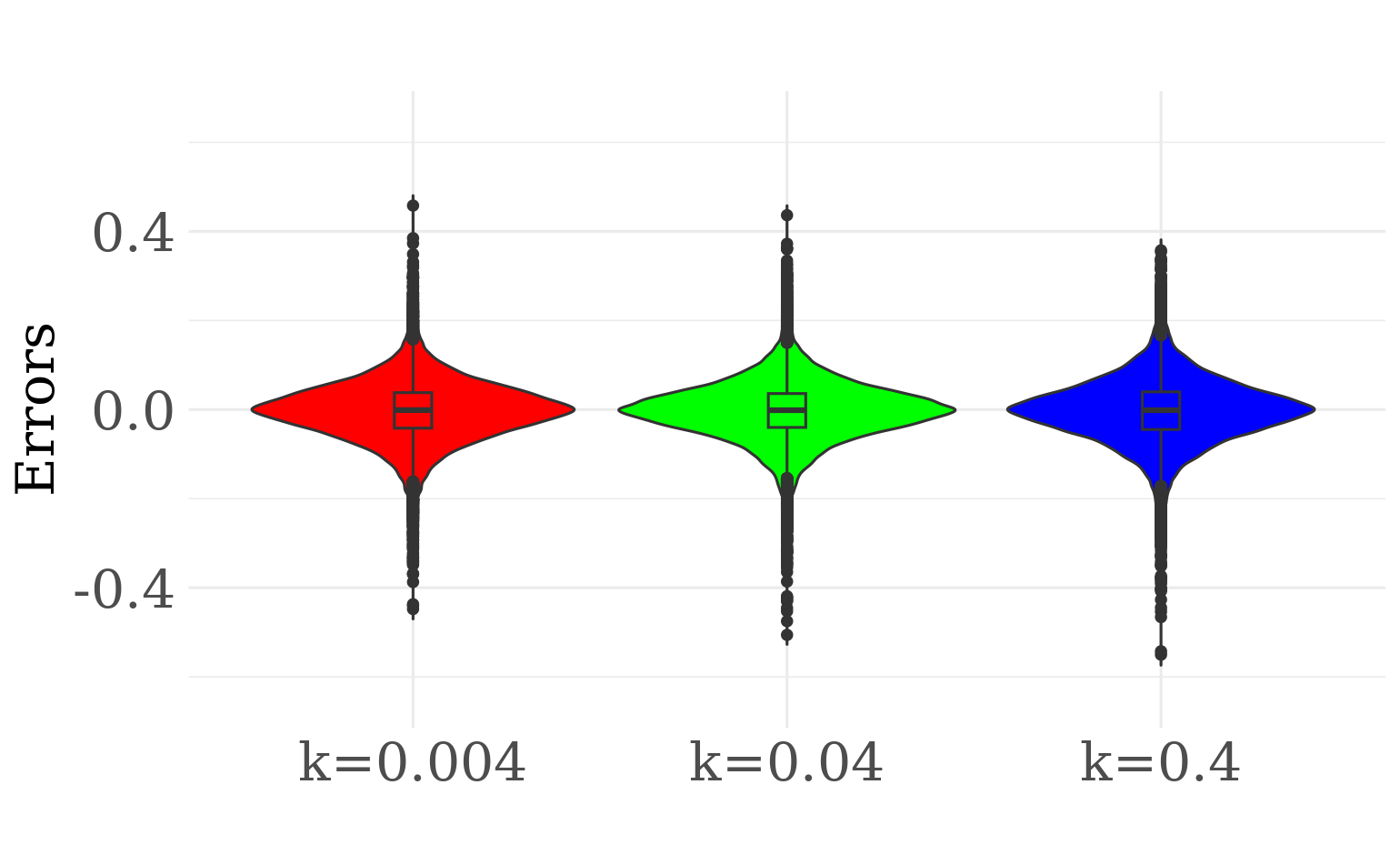}%
	\end{minipage}%
	\hspace{0.075\textwidth}
	\begin{minipage}[t]{0.40\textwidth}
		\includegraphics[width=\textwidth]{./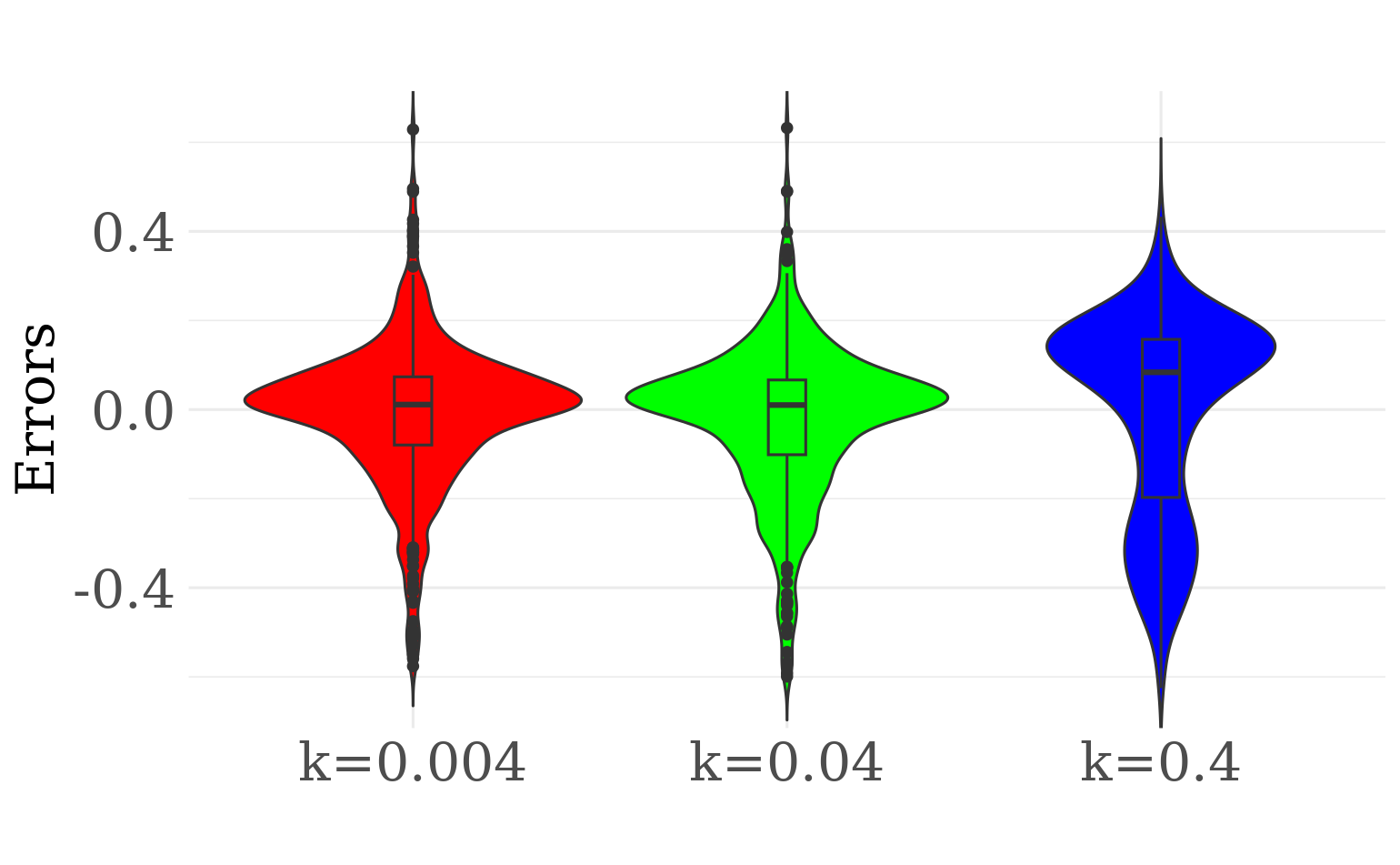}%
	\end{minipage}
	\caption{In sample (left) and out-of-sample (right) errors with respect to the true underlying probabilities when \texttt{NEX} is fit to synthetic data using length scales $k$ in $\{0.004, 0.04, 0.4\}$, respectively left, center and right of each subplot, with true length scale set to $0.1$ in the data generating process.}
\label{fig:sens}
\end{figure}

In general, the selection of the length scale parameter of the Gaussian processes $k_\mu, k_W$ should be guided by the strength of autocorrelation in the observed networks. To illustrate the robustness of the model to miss-specification in this parameter, we first fit \texttt{NEX} to the synthetic data with length scales $[0.004, 0.04, 0.4]$. As we can see from Figure \ref{fig:sens} the model is robust even to gross miss-specification. The model performs better when the length-scale is underestimated (too high covariance between time periods) rather than overestimated (too low covariance between time periods).  In the discussion that follows, all results refer to length scale values $k_{\mu}=k_{W}=0.02$ -- a moderate underestimate that still captures temporal dependence.

\subsection{Robustness to miss-specification} \label{SM_sim_misspecified}
In this subsection we assess the robustness of \texttt{NEX} to model miss-specification and evaluate it with respect to the higher-dimensional dynamic latent factor model of \cite{Durante.etal:2014}. We fit both \texttt{NEX} and the dynamic latent factor model to three synthetic data sets data generated from the \texttt{NEX} model, the dynamic latent factor model and a dynamic stochastic block model. Given that these data-generating mechanisms were formulated for the study of symmetric networks without self-edges, we restrict ourselves to this case in this subsection and correspondingly fit the symmetric variant of the \texttt{NEX} model specified in Section \ref{sec:model:formulation}. To evaluate robustness, we compare in-sample and out-of-sample performance of \texttt{NEX} and the dynamic latent factor model fitted to the data generated from the three models above. In each of the three cases, we consider networks consisting of $N=20$ nodes, observed at $T=40$ different time steps. 

Recall that the dynamic latent factor network model of \cite{Durante.etal:2014} imposes temporal dependence on the latent factors of an inner product network model: 
\begin{align}
\label{eq:DLF_sim_1}
p(\mathcal{A}_{ijt}=1| s_{ij}(t) &= \frac{1}{1 + e^{-s_{ij}(t)}} \\
\label{eq:DLF_sim_2} s_{ij}(t) &= \mu(t) + x_{i}(t)'x_j(t), \quad x_i, x_j \in \R^H \\
\label{eq:DLF_sim_3} x_{ih} &\sim \text{GP}(0, \tau_h^{-1}C_X) \quad h = 1, \ldots, H, \\
\label{eq:DLF_sim_4}(\mu_{t_1}, \ldots, \mu_{t_T})^\top &\sim \text{GP}(0, C_\mu)
\end{align}
where $c_X$  is a squared exponential kernel for the Gaussian process governing the distribution of any fixed component $h$ of the latent factor vector over time: $C_X(t,t') = \text{exp}(-\kappa_X | t - t'|)/2, C_\mu(t,t') = \text{exp}(-\kappa_\mu| t - t'|)/2$, and the effective dimension of the latent space is controlled by a multiplicative gamma shrinkage process on $\tau_h$. The above specification includes a slight modification to increase computational stability using an L1 exponential kernel rather than the L2 kernel originally specified. In our simulation, we set $H=2, \kappa_x = \kappa_{\mu} = 0.04$, and all other hyperparameters according to the recommendations in \cite{Durante.etal:2014}. Because this model is extremely flexible compared to the more parsimonious model proposed, it provides a good test of the ability of \texttt{NEX} to recover complex data structures while still maintaining a relatively low dimension on the exemplar space, $K$.

Stochastic block models describe 
networks via community structure, and are typified by the assumption of stochastic equivalence: the nodes of a network can be partitioned into groups such that the probability of connection between any pair of nodes depends only on the block assignment of the respective nodes. The variant we consider permits a dynamic structure by assuming that block assignment is constant over time, but that within- and between-block connection probabilities vary over time. Formally, we suppose that each of the nodes $i = 1, \ldots, n$ belongs to one of the $B<n$ blocks with block membership indicated by the vector $Z_i \in \R^B$ having a one in the position corresponding to the group of node $i$ with all other elements zero.  The probabilities of connection within and between blocks at time $t$ are described by the $B \times B$ symmetric block connection matrix $\Pi(t)$. This dynamic stochastic block model (SBM) has the following formulation: 
\begin{gather}
\label{eq:dyn:1} pr\{\mathcal{A}_{ijt}=1| Z_i, Z_j, \Pi(t)\} =  Z_i^\top \Pi(t) Z_j \\
\label{eq:dyn:2} \hbox{logit}^{-1}\left\{\Pi_{b,b'}(.)\right\} \sim \mbox{GP}(\mu_{b, b'}, \Sigma_B), 
\end{gather} 
for $i=1,\ldots,n, j=1,\ldots,n$, and
$\Sigma_B(t,t')=\text{exp}(-\kappa_B | t - t'|)/4$ with $\kappa_B=0.04$. 
To summarize: the membership of a node is fixed across time, while the probabilities of observing a edge between different groups evolves. This provides a useful test to see if \texttt{NEX} is capable of effectively modeling the block structure if it is indeed present in the data. 

The data generated from the symmetric \texttt{NEX} variant is simulated with $H_{true}=2, K_{true}=5$, $C_x(t,t') = C_{\mu}(t,t') = \text{exp}(-\kappa |t-t'|)$, $\kappa=0.04$, $\sigma^2 =2/\sqrt{K_{true}}$, $\mu_0 = 0$, and with all hyperparameters equal to their values above.

\begin{figure}[htp]
	\begin{minipage}[t]{0.99\textwidth}
		\includegraphics[width=\textwidth]{./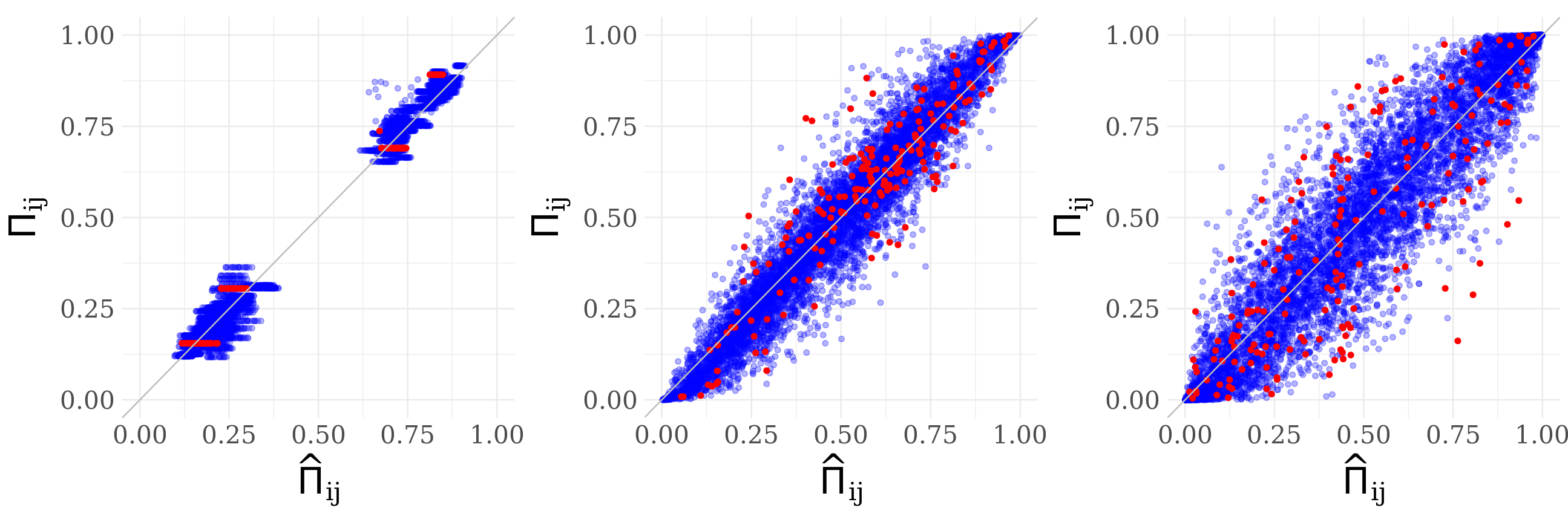}%
	\end{minipage}%
	\hspace{0.0\textwidth}
	\begin{minipage}[t]{0.99\textwidth}
		\includegraphics[width=\textwidth]{./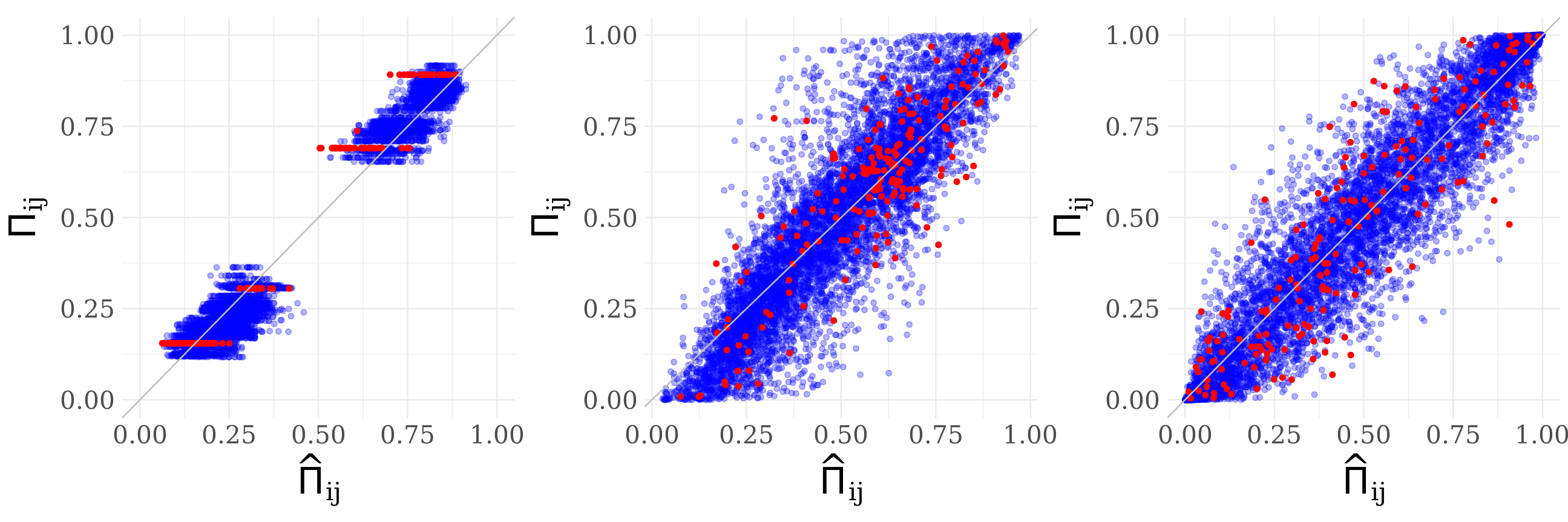}%
	\end{minipage}
    \caption{In-sample (blue) and out-of-sample (red) performance of \texttt{NEX} (top row) and dynamic latent factor model (bottom row) for synthetic data generated from the dynamic SBM \eqref{eq:dyn:1}-\eqref{eq:dyn:2} (left column), \texttt{NEX} (center column), and the dynamic latent factor model \eqref{eq:DLF_sim_1}-\eqref{eq:DLF_sim_4} (right column).}
    \label{fig:miss}
\end{figure}

To each of the above synthetic datasets we fit the symmetric \texttt{NEX} model and the dynamic latent factor model. In fitting \texttt{NEX}, we set the upper bound on the dimension of the latent space (H) and exemplar space (K) as  $H=K=10$ for data simulated from the dynamic SBM and \texttt{NEX}, while we set $H=10, K=20$ for the data simulated from the dynamic latent factor model given its high flexibility; in fitting the dynamic latent factor model, we set the upper bound on the dimension of the latent space (H) as H=5 for all synthetic data sets due to its greater flexibility and tendency towards overfitting. All \texttt{NEX} hyperparameters are set as above in Section \ref{SimulationNex} with the exception of the GP covariance, which we set to have a moderate level of temporal dependence using $C_x(t,t') = C_{\mu}(t,t') = \text{exp}(-\kappa|t-t'|)$ with $\kappa= \kappa_x = \kappa_\mu = 0.02$ consistent with the procedure in Section \ref{SimulationNex}; the same GP covariance hyperparameters are used for the dynamic latent factor model. All other hyperparameters values are set to the values used in \cite{Durante.etal:2014}, detailed in Section \ref{SM_app_priors}. 

For each model fit and each synthetic data set, we generate a single chain of $4000$ samples, discarding the first $1000$ as burn-in. 
We then assess both the in-sample and out-of-sample recovery of the true underlying probabilities. From Figure \ref{fig:miss} and Figure \ref{fig:miss:2}, it is clear that the dynamic latent factor model shows a greater degree of overfitting, especially for data generated from the symmetric \texttt{NEX} model and the dynamic stochastic block model, while providing a slightly better fit to data generated from the dynamic latent factor model. Hence, we find that \texttt{NEX} outperforms the dynamic latent factor model in terms of out-of-sample error for data generated from \texttt{NEX} and the dynamic SBM, while performing slightly worse than the dynamic latent factor model for data generated from that model itself, as expected.

\begin{figure}[htp]
\centering
	\begin{minipage}[t]{0.49\textwidth}
		\includegraphics[width=\textwidth]{./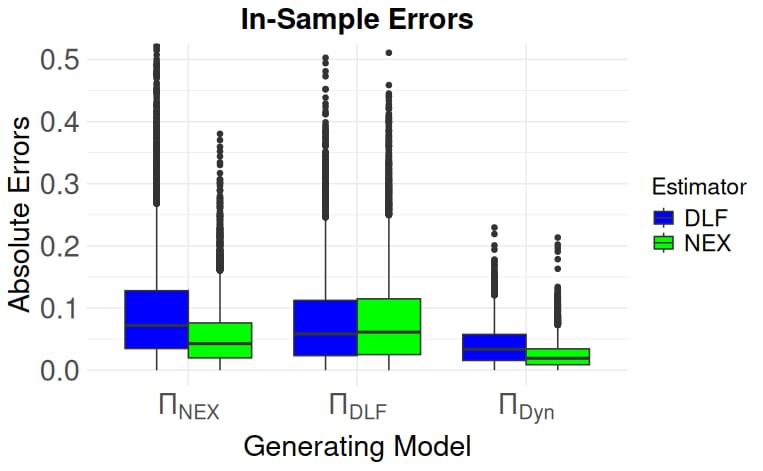}%
	\end{minipage}%
	\hspace{0.0\textwidth}
	\begin{minipage}[t]{0.49\textwidth}
		\includegraphics[width=\textwidth]{./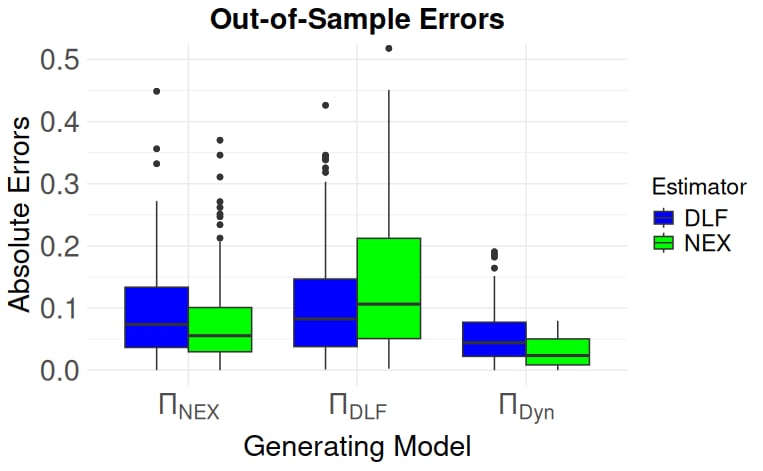}%
	\end{minipage}
	\caption{In-sample (left) and out-of-sample(right) performance of \texttt{NEX} and the dynamic latent factor model  (\texttt{DLF}) for synthetic data generated from \texttt{NEX}, the dynamic latent factor model \eqref{eq:DLF_sim_1}-\eqref{eq:DLF_sim_4}, and the dynamic stochastic block model (Dyn) \eqref{eq:dyn:1}-\eqref{eq:dyn:2}.}
\label{fig:miss:2}
\end{figure}


\section{Zackenberg Application Details}
\label{SM_app}

\subsection{Data preparation}\label{SM_app_taxa}
Data preparation included taxonomical re-assignment to eliminate ambiguity, construction of a seasonally aligned time measure, construction of occurrence matrices and co-occurrence arrays, and subsequent subsetting. Although all data analysis was performed with taxonomical labels, the figures below omit taxa names or utilize generic substitutes due to data sharing limitations; where possible, limited results relating to specific taxa of interest are presented. 

To provide an overview of the data, we construct a meta-network by summing binary interaction networks over all years, encompassing 49 weeks of data. Of 3315 flower-pollinator pairs, only 594 are actually observed in an interaction. Of these, approximately 50\% of the interactions occur in only one of the 49 observed weeks. Only 35 interactions are observed in at least ten separate weeks, and the most common interaction is observed in 31 separate weeks. The meta-network shown in Figure \ref{fig:sm_metanetwork} shows the sparsity of interactions in this setting. 

\begin{figure}[htp]
    \centering
    \includegraphics[width=0.75\linewidth]{./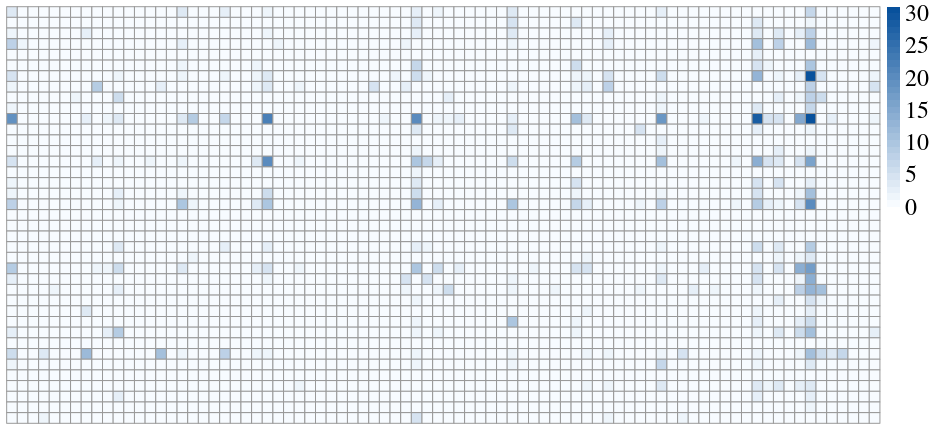}
    \caption{Meta-network of insect-plant interactions: rows indicate plant taxa, columns indicate insect taxa, and cell color indicates the number of distinct weeks an interaction is observed.}
    \label{fig:sm_metanetwork}
\end{figure}

The data consists of 39 plant taxa, of which all are resolved to the species level except for the genus of the taxonomically challenging genus of whitlow grasses \textit{Draba}, which is not otherwise represented in the data and therefore presents no ambiguity. The insect taxa present more of a challenge: the raw data consists of 114 insect taxa including 42 incompletely identified taxa: 36 identified only to the genus level and the remaining 6 identified only to the family level. 

In the case where a recorded genus has at least one species resolved but also has individuals identified only at the genus level, we pool all observations into a single operational taxon and similarly for cases where a recorded genus has no species resolved. Although these operational taxa may in fact consist of different species with different population sizes, poorly identified taxa tend to be ecologically similar with respect to the pollination niche, a general phenomenon called niche conservatism \citep{Wiens}. In contrast, when individuals are identified to the family level only, we discard the observations due to ambiguity between these individuals and individuals from fully resolved species in the same family; one exception to this procedure is made for the family Scathophagidae, which we treat as an operational taxon. 

After executing the taxa correction protocols above, we obtain a collection of networks spanning 82 operational taxa: 36 consisting of genus-level groupings, 1 corresponding to the family Scathophagidae as described above, and the remainder corresponding to fully resolved species. Each taxa is observed in at least one pollination interaction, with a median of 8 and a maximum of 717 interactions observed, including repeated observations of the same plant-pollinator pair in a single week. Most species are very infrequent pollinators. 


Since the timing of snow melt varies from year to year in the study region, we align the nominal week indices by defining time via cumulative heat elapsed since snow melt, or growth degree days. We compute the cumulative sum of growth degree-days (GDD) in each week index as:
$$GDD = \max(0,(T_{MAX} + T_{MIN})/2 - T_{BASE}),$$
where $T_{MAX}, T_{MIN}$, and $T_{BASE}$ refer to the maximum, minimum and base air temperature, using $0^\circ$ C as the base temperature. The temperature is recorded as the air temperature 200 cm above the ground from the Greenland Ecosystem Monitoring program \citep{climatebasis}.


Plant occurrences are assumed to be observed without error, while insect occurrences are estimated using three different procedures: raw occurrence data - an insect is present only if it is observed in some interaction in that week, padded occurrences 1 - an insect is present if it was observed in an interaction in week $t$ or in either of week $t-1$ or $t+1$, padded occurrences 2 - an insect is present in week $t$ if it is present in week $t-1$ and $t+1$. A visualization of the three occurrence constructions is provided in Figure \ref{fig:sm_occurrences}. For the above subset of common insects and plants, we increase the number of insects assumed to be present by 56 to 81\% in scenario 1 and by 4.8 to 12\% under padding scenario 2.

\begin{figure}[htp]
    \centering
    \includegraphics[width=0.65\linewidth]{./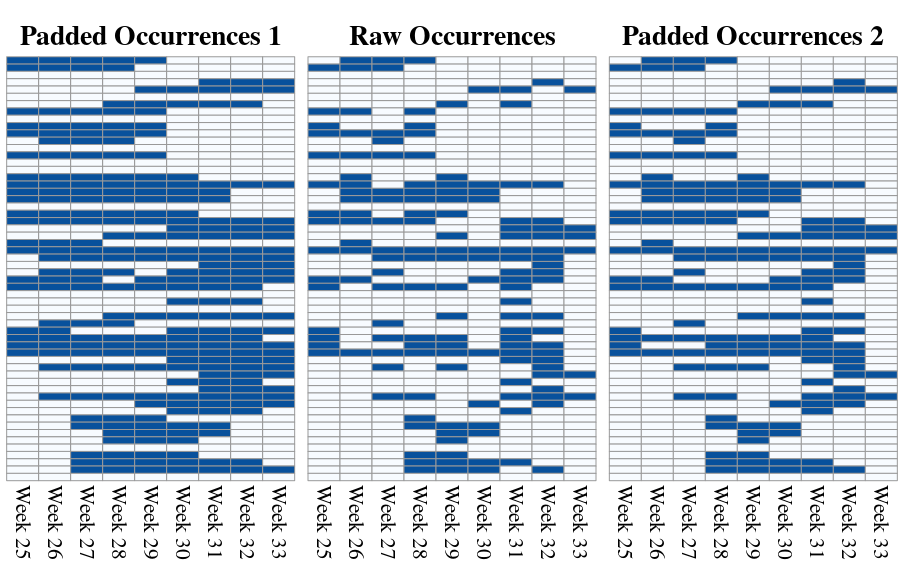}
    \caption{Estimated insect species occurrences for 2011 under three different scenarios, rows indicate insect taxa and a darkened cell indicates the presence of the taxa in that week.}
    \label{fig:sm_occurrences}
\end{figure}

After preparing the supplementary data as previously described, we subset the raw networks to include only plants and insects observed in an interaction over at least two weeks and remove two weekly networks in which only one plant was present. Outlier networks in 2011 result from observations all coming from a single plant species in those weeks - \textit{Arenaria pseudofrigida} in both cases. The above subsetting results in a subnetwork of the 35 most common plants and 58 most common insects for 47 weeks.

\subsection{Full model specification}
\label{SM_app_priors}
 For \texttt{NEX} we set $\alpha_1^K = \alpha_1^H=7$, $\beta_1^K=\beta_1^H=6$, $\alpha_2^K=2.5$, $\beta_2^K=7.5$, $\alpha_2^H=3$ and $\beta_2^H=12$ for the double shrinkage prior \eqref{mgp:prior}-\eqref{mgp:prior:2}, and empirically compute $\mu_0$ through \eqref{eq:mu:0}.
 We define the temporal dependence of the latent space via the separable covariance $\Sigma = \exp(-k_1 D_1) \exp(-k_2 D_2)$ where $D_1(w,w') = |w - w'|$ and $D_2(r,r') = |r - r'|$. We set the temporal dependence for the seasonal intercept $\mu_w$ with $\Sigma_\mu = \exp(-k_1 D_1)$.  For both \texttt{NEX} and the dynamic latent factor model, we utilize $k_1=k_{\mu}=0.0075$ and $k_2 = 0.075$, resulting in moderate prior auto-correlation. 
 
 For the dynamic latent factor model, we specify the following: 
\begin{align*}
    E[a_{ijwr} \mid \pi_{ijwr} ] &= \pi_{ijwr} = \frac{1}{1 + e^{-s_{ijwr}}} \\
    s_{ijwr} &= \mu_w + \alpha_i + \gamma_r + \beta c_{wr} + x_{iwr}^\top y_{jwr} \\
    x_{ih}(\cdot) & \sim GP(0, \tau_h^{-1} \Sigma) \\
    y_{ih}(\cdot) & \sim GP(0, \tau_h^{-1} \Sigma) \\
    \mu(\cdot) & \sim GP(0, \Sigma_\mu) \\
    \alpha_i & \sim N(0, \sigma^2_\alpha) \text{ for }i = 1, \cdots, M \\
    \gamma_r & \sim N(0, \sigma^2_\gamma) \text{ for }r = 1, \cdots, R \\
    1/\sigma_\alpha^2, 1/\sigma_\gamma^2 &\sim  \text{Ga}(\eta_0/2, \eta_0 \sigma^2_0/2) 
\end{align*}
where $\Sigma$ and $\Sigma_\mu$ are defined as above for \texttt{NEX}. We set $\eta_0 = 100$, $\mu_{\beta}=0$, $\sigma_{\beta}=1$. The shrinkage parameter $\tau_h^{-1}$ is defined as in \cite{Durante.etal:2014}: 
$$\tau_h = \prod_{k = 1}^h \vartheta_k, \vartheta_1 \sim \text{Ga}(a_1, 1), \vartheta_k \sim \text{Ga}(a_2, 1), k= 2, \cdots, H,$$
with $a_1 = a_2 = 2$.
\subsection{Sensitivity analysis with respect to occurrence construction}
\label{SM_app_sens}

In this section, we explore the sensitivity of the results to the occurrence specification defined in Section \ref{SM_app_taxa} above. Although exploratory analysis of raw data suggests a positive effect of temperature on interaction richness, we find the effect disappears when conditioning on co-occurrence with either of the three constructions proposed and fitting the \texttt{NEX} model, as can be seen in Figure \ref{fig:sm_temperature}. Similarly, the raw data show a seasonal trend in interaction richness, with pollination activity peaking in mid-summer. However, when controlling for co-occurrence in the \texttt{NEX} model, we find a weak reversed trend; see Figure \ref{fig:sm_seasonality}. 

\begin{figure}[htp]
	\begin{minipage}[t]{0.45\textwidth}
		\includegraphics[width=\textwidth]{./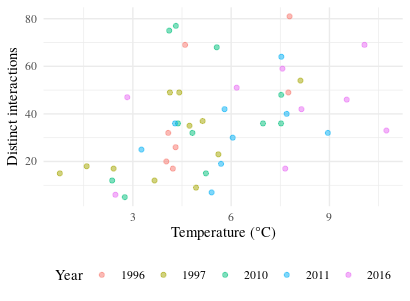}%
	\end{minipage}%
	\hspace{0.0\textwidth}
	\begin{minipage}[t]{0.45\textwidth}
		\includegraphics[width=\textwidth]{./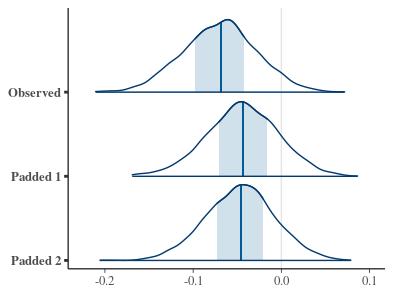}%
	\end{minipage}
	\caption{Observed relationship between temperature and pollination activity (left) versus estimated effect in the conditional \texttt{NEX} model (right) using observed occurrence, padded occurrence scenario 1, and padded occurrence scenario 2 with posterior mean and 50\% credible interval.}
    \label{fig:sm_temperature}
\end{figure}

\begin{figure}[htp]
	\begin{minipage}[t]{0.45\textwidth}
		\includegraphics[width=\textwidth]{./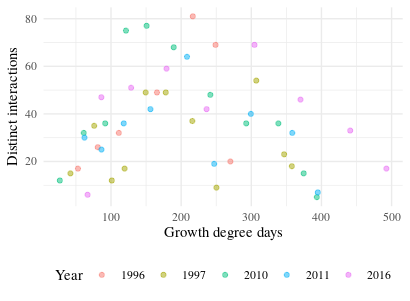}%
	\end{minipage}%
	\hspace{0.0\textwidth}
	\begin{minipage}[t]{0.49\textwidth}
		\includegraphics[width=\textwidth]{./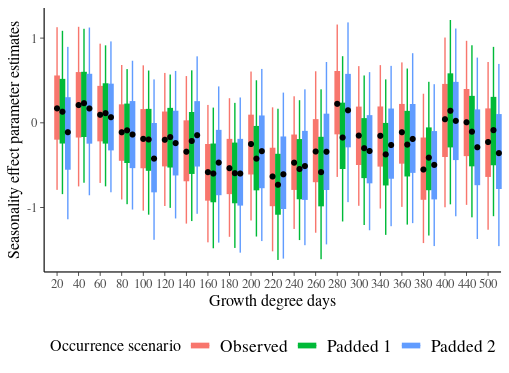}%
	\end{minipage}
	\caption{Observed relationship between seasonality and pollination activity richness (left) versus estimated effect in the conditional \texttt{NEX} model (right) using observed occurrence, padded occurrence scenario 1, and padded occurrence scenario 2 with posterior means, 50\% and 90\% credible intervals.}
    \label{fig:sm_seasonality}
\end{figure}

Sensitivity analysis for the plant random effects accommodating degree heterogeneity is provided in Figure \ref{fig:sm_plant_effects}. The results are broadly consistent across the three scenarios of occurrence, with all the model specifications suggesting the importance of Plant 11 (\textit{Dryas octopetala}) and Plant 27 (\textit{Saxifraga hyperborea}) for the arctic ecosystem. 

\begin{figure}[htp]
    \centering
    \includegraphics[width=0.85\linewidth]{./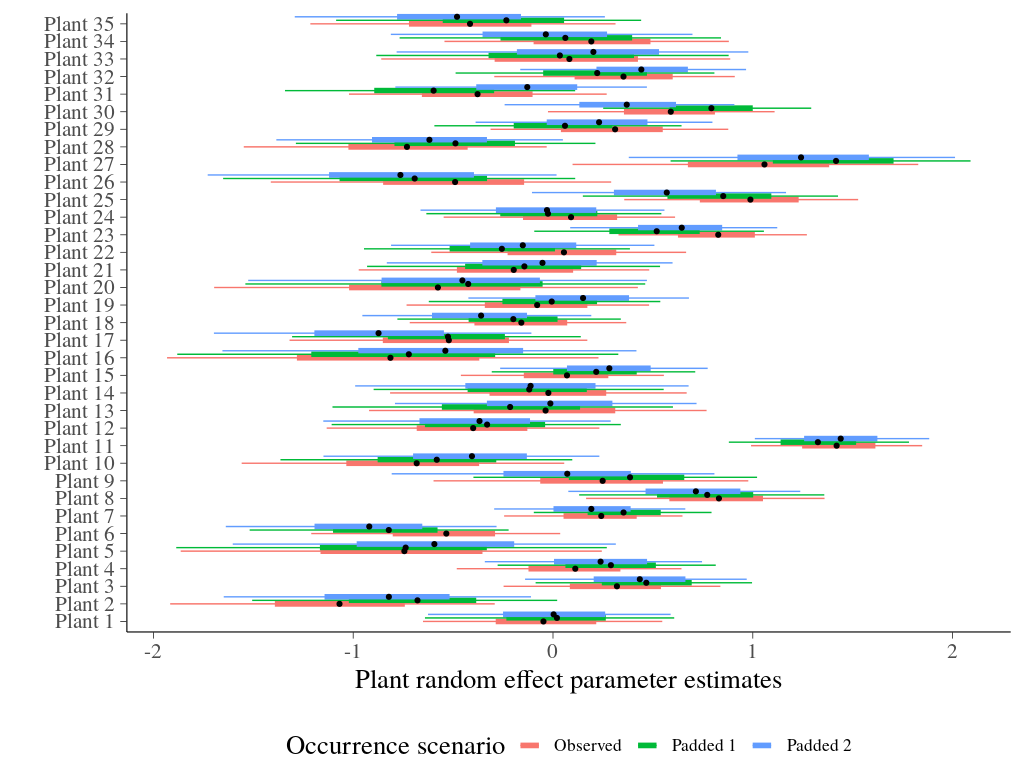}
    \caption{Posterior mean, 50\% and 90\% credible intervals for plant random effects obtained from the \texttt{NEX} model fit under three separate occurrence scenarios.}
    \label{fig:sm_plant_effects}
\end{figure}

Sensitivity analysis with respect to the year random effects shown in Figure \ref{fig:sm_year_effects} suggests the broad stability of the arctic pollination network over the two decades studied, with a slight increase in baseline pollination activity among co-occurring pairs in recent years. The cause of this uptick is uncertain, with one potential explanation being a reduction in alternatives increasing the propensity for interaction between any given pair consisting of a present insect and a present flower. 

\begin{figure}[htp]
    \centering
    \includegraphics[width=0.65\linewidth]{./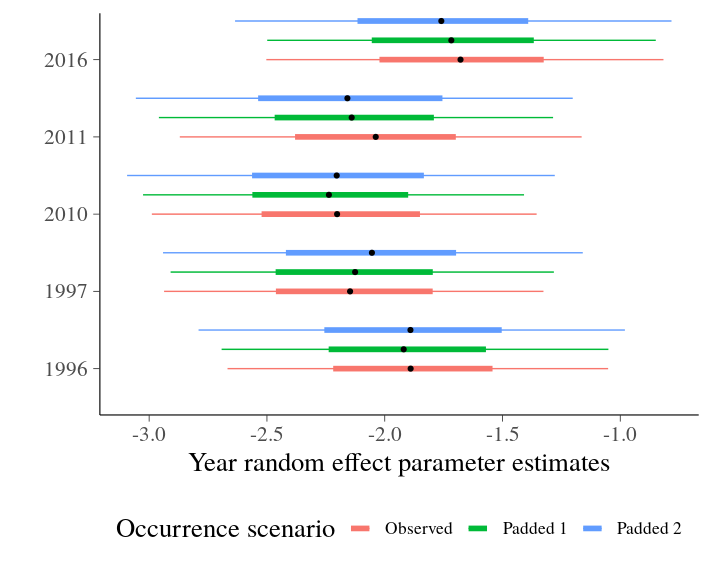}
    \caption{Posterior mean, 50\% and 90\% credible intervals for year random effects obtained from the \texttt{NEX} model fit under three separate occurrence scenarios.}
    \label{fig:sm_year_effects}
\end{figure}

Figure \ref{fig:sm_traceplots} shows good mixing and convergence of the log likelihood to the stationary distribution under all three occurrence scenarios. 

\begin{figure}[htp]
    \centering
    \includegraphics[width=0.85\linewidth]{./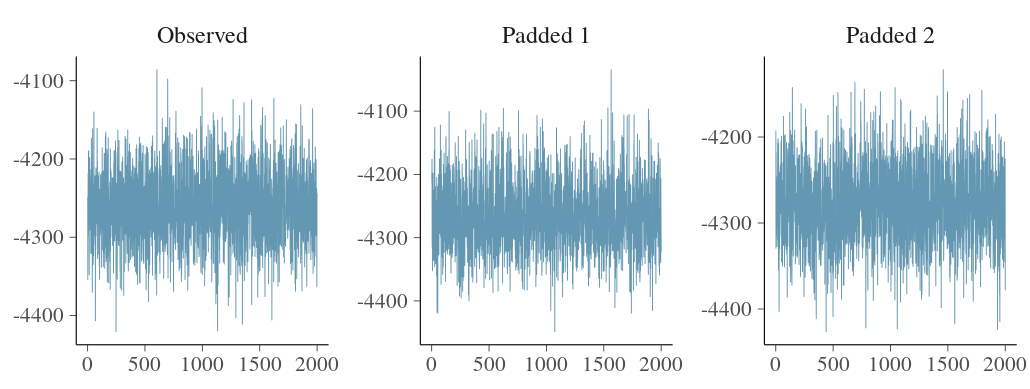}
    \caption{Trace plots of the log likelihood obtained from the \texttt{NEX} model fit under three separate occurrence scenarios indicate good mixing and convergence of the respective chains to their stationary distributions.}
    \label{fig:sm_traceplots}
\end{figure}

\subsection{New high posterior probability interactions predicted by \texttt{NEX}}\label{sm_app_new_interactions}

Because we are interested in the impacts of shifting phenologies on pollination, it is important to investigate which currently unobservable interactions might occur in the future given overlap of plant blooming and insect foraging. Focusing only on rare insects, we predict \textit{Saxifraga hyperborea} to be a potential flower resource for: \textit{Euxoa adumbrata, Agriades glandon, Entephria kidluitata, Praon brevistigma, Propanteles fulvipes, Neolaria prominens, Orthocladius}. Additionally \textit{Salix acrtica} is a predicted flower resource for the genus \textit{Atractodes}. These results are summarized in Figure \ref{fig:app_future_interactions}.

\begin{figure}[htp]
    \centering
    \includegraphics[width=0.9\linewidth]{./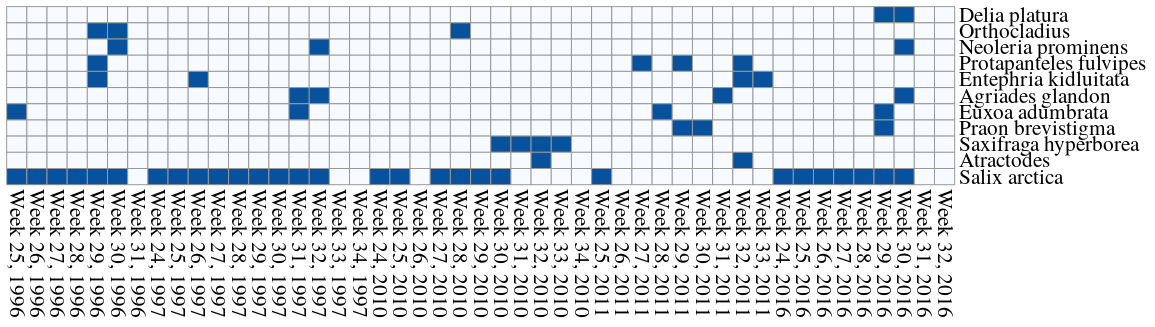}
    \caption{Occurrence patterns for selected rare insects and the plants \textit{Saxifraga hyperborea} and \textit{Salix arctica} with which they have a mean posterior interaction probability greater than 0.5. Occurrence information is generated using Padded Occurrence Scenario 1.}
    \label{fig:app_future_interactions}
\end{figure}

A basic question of interest is which interactions have been missed due to undersampling or difficulty of observation. We aggregate interactions by insect family and plant species by taking the maximum of posterior mean interaction probabilities across insects within that family and summarize these results in Figure \ref{fig:missing_future_interactions}.  We identify many previously unobserved high posterior probability interactions for the families Agromyzidae (insect family 1) and Scathophagidae (insect family 23) - and Braconidae (insect family 4) and Ichneumonidae (insect family 13).

\begin{figure}[htp]
    \centering
     \includegraphics[width = 0.75 \textwidth]{./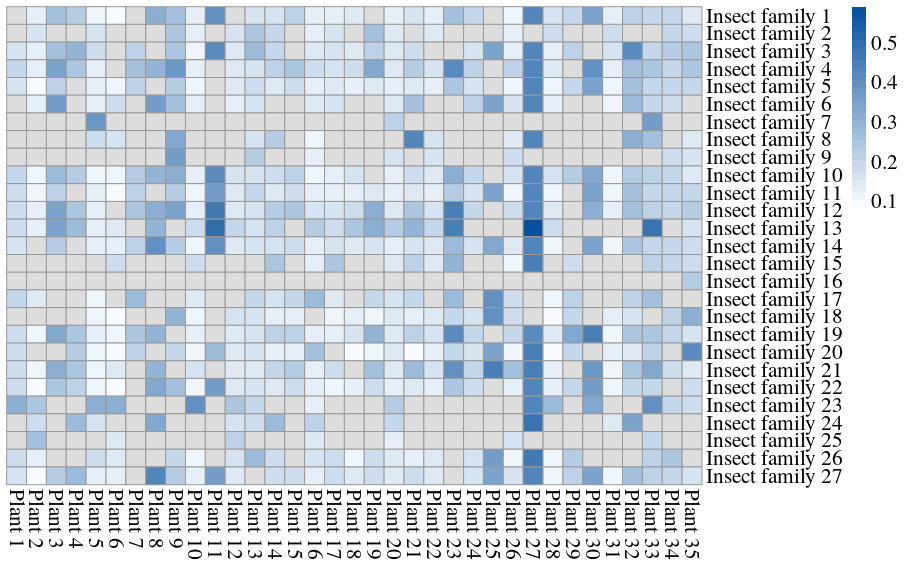}
    \caption{Maximum of posterior mean interaction probabilities across all weeks, by insect family; model fit using Padded Occurrence Scenario 1. Grey squares indicate observed interactions in the meta-network.}
    \label{fig:missing_future_interactions}
\end{figure}

\end{appendix}
\end{document}